\documentclass[letterpaper,10pt,conference]{ieeeconf}

\IEEEoverridecommandlockouts
\newif\ifshowcomments
\showcommentsfalse % TOGGLE COMMENTS HERE

\usepackage{amsmath, amsfonts, amssymb, mathtools, dsfont, circledsteps, bm}

\usepackage{amsthm}
\usepackage[per-mode=symbol]{siunitx}[=v2]
\usepackage{bbm}
\usepackage{stmaryrd}
\usepackage[acronym]{glossaries-extra}
\setabbreviationstyle[acronym]{long-short}
\glssetcategoryattribute{acronym}{nohyperfirst}{true}

\usepackage{multirow}
\usepackage{booktabs}
\usepackage{array}
\newcommand{\PreserveBackslash}[1]{\let\temp=\\#1\let\\=\temp}
\newcolumntype{C}[1]{>{\PreserveBackslash\centering}p{#1}}
\newcolumntype{R}[1]{>{\PreserveBackslash\raggedleft}p{#1}}
\newcolumntype{L}[1]{>{\PreserveBackslash\raggedright}p{#1}}
\usepackage{float}
\usepackage{comment}
\usepackage{xcolor}
\usepackage{todonotes}
\usepackage{comment}
\let\labelindent\relax
\usepackage{enumitem}
\setlist[description]{font=\normalfont\itshape}
\usepackage[font=footnotesize]{caption}
\usepackage{subcaption}
\usepackage{cite}
\usepackage{hyperref}
\usepackage[capitalise]{cleveref}

\Crefname{equation}{Equation}{Equations}
\Crefname{definition}{Definition}{Definitions}
\Crefname{lemma}{Lemma}{Lemmas}
\Crefname{theorem}{Theorem}{Theorems}
\Crefname{proposition}{Proposition}{Propositions}
\Crefname{corollary}{Corollary}{Corollaries}
\Crefname{example}{Example}{Examples}
\creflabelformat{equation}{#2#1#3}
\Crefname{figure}{Fig.}{Figs.}

\newacronym{uav}{UAV}{unmanned aerial vehicle}

\newcommand{\cat}[1]{\mathsf{#1}}
\newcommand{\bool}{\cat{Bool}}
\newcommand{\cost}{\cat{Cost}}
\newcommand{\fuzz}{\cat{Fuzz}}
\newcommand{\pace}{\cat{Pace}}

\newcommand{\quant}{\cat{Q}}
\newcommand{\quantNats}{\bar{\cat{N}}}

\DeclareMathOperator{\prov}{prov}
\DeclareMathOperator{\req}{req}
\DeclareMathOperator{\obj}{obj}

\newcommand{\R}{\cat{R}}
\newcommand{\F}{\cat{F}}
\newcommand{\M}{\cat{M}}
\newcommand{\I}{\cat{I}}
\newcommand{\J}{\cat{J}}
\renewcommand{\phi}{\varphi}

\DeclareMathOperator{\Tr}{Tr}

\newcommand{\Reals}{\mathbb{R}}
\newcommand{\Nats}{\mathbb{N}}
\newcommand{\NatsInf}{\bar{\Nats}}
\newcommand{\maketup}[1]{\left( #1 \right)}
\newcommand{\makeset}[1]{\left\{ #1 \right\}}
\newcommand{\homarg}[4]{\llbracket {#1}, {#2} \rrbracket_{#3}}
\renewcommand{\P}{\mathcal{P}}

\newcommand{\singleton}{\star}

\makeatletter
\def\slashedarrowfill@#1#2#3#4#5{%
  $\m@th\thickmuskip0mu\medmuskip\thickmuskip\thinmuskip\thickmuskip
  \relax#5#1\mkern-7mu%
  \cleaders\hbox{$#5\mkern-2mu#2\mkern-2mu$}\hfill
  \mathclap{#3}\mathclap{#2}%
  \cleaders\hbox{$#5\mkern-2mu#2\mkern-2mu$}\hfill
  \mkern-7mu#4$%
}
\def\rightslashedarrowfill@{%
  \slashedarrowfill@\relbar\relbar\mapstochar\rightarrow}
\newcommand\xslashedrightarrow[2][]{%
  \ext@arrow 0055{\rightslashedarrowfill@}{#1}{#2}}
\makeatother
\def\slashedrightarrow{\xslashedrightarrow{}}
\def\slashedrightarrow{\relbar\joinrel\mapstochar\joinrel\rightarrow}

\definecolor{buzzgold}{HTML}{EAAA00}
\definecolor{techgold}{HTML}{B3A369}
\definecolor{navyblue}{HTML}{003057}
\definecolor{techgreen}{HTML}{22EE66}
\definecolor{mitred}{HTML}{750014}
\definecolor{caltechorange}{HTML}{FF6C0C}
\definecolor{brightred}{HTML}{FF1423}
\definecolor{afrlblue}{HTML}{003087}

\newtheorem{definition}{Definition}

\crefname{assumption}{assumption}{assumptions}
\Crefname{assumption}{Assumption}{Assumptions}
\newtheorem{notation}{Notation}

\newtheorem{theorem}{Theorem}
\newtheorem{lemma}{Lemma}

\newtheorem{example}{Example}

\newtheorem{remark}{Remark}

\renewcommand{\baselinestretch}{0.958}
\newcounter{taskcounter}

\newcommand{\hr}[1]{%
    \ifshowcomments
        \textcolor{buzzgold}{[HR] #1}%
    \fi
}

\newcommand{\task}[1]{%
    \ifshowcomments
    \refstepcounter{taskcounter}
        ~\todo[inline, color=techgold!40, textcolor=black,caption={Task}]{{\bf Task \thetaskcounter:} #1}%
    \fi
}

\newcommand{\mk}[1]{%
    \ifshowcomments
        \textcolor{afrlblue}{[MK] #1}%
    \fi
}

\title{\LARGE \bf 
Quantale-Enriched Co-Design: \\ Toward a Framework for Quantitative Heterogeneous System Design}

\author{Hans Riess$^{1}$, Yujun Huang$^{2}$, Matthew Klawonn$^{3}$, Gioele Zardini$^{2}$, and Matthew Hale$^{1}$% <-this % stops a space
\thanks{$^{1}$Dept. of Electrical \& Computer Engineering, Georgia Institute of Technology, Atlanta, GA, USA. Emails: \text{\{riess,mhale30\}@gatech.edu}}%
\thanks{$^{2}$Laboratory for Information \& Decision Systems (LIDS), Massachusetts Institute of Technology, Cambridge, MA, USA. Emails: \text{\{yujun233,gzardini\}@mit.edu}}%
\thanks{$^{3}$AFRL, Rome, NY, USA. Email: \text{matthew.klawonn.2@us.af.mil}}%
\thanks{Riess was supported by DARPA (HR0011-25-3-0235); Riess and Hale were supported by AFOSR (FA9550-23-1-0120, FA9550-19-1-0169) and ONR (N00014-22-1-2435).
Huang and Zardini were supported by DARPA YFA (D25AC00373).}
}
\begin{document}
%=-=-=-=-=-=-=-=-=-=-=-=-=-=-=-=-=-=-=-=-=-=-=-=-=-=-=-=-=-=-=-=-=-=-=-=-=-=-=%

\maketitle

\begin{abstract}
Monotone co-design enables compositional engineering design by modeling components through feasibility relations between required resources and provided functionalities. 
However, its standard boolean formulation cannot natively represent quantitative criteria such as cost, confidence, or implementation choice. 
In practice, these quantities are often introduced through ad hoc scalarization or by augmenting the resource space, which obscures system structure and increases computational burden.
We address this limitation by developing a quantale-enriched theory of co-design. 
We model resources and functionalities as quantale-enriched categories and design problems as quantale-enriched profunctors, thereby lifting co-design from boolean feasibility to general quantitative evaluation. 
We show that the fundamental operations of series, parallel, and feedback composition remain valid over arbitrary commutative quantales. 
We further introduce heterogeneous composition through change-of-base maps between quantales, enabling different subsystems to be evaluated in different local semantics and then composed in a common framework.
The resulting theory unifies feasibility-, cost-, confidence-, and implementation-aware co-design within one compositional formalism. 
Numerical examples on a target-tracking system and a UAV delivery problem demonstrate the framework and highlight how native quantitative enrichment can avoid the architectural and computational drawbacks of boolean-only formulations.
% Monotone co-design provides a compositional framework for systems engineering, but its foundations in boolean logic hinder its ability to model design cost, implementation choice, or confidence that a component will function. In this work, we generalize the existing boolean framework to arbitrary quantales---ordered sets with logical primitives---to model design problems as structures in applied category theory known as enriched profunctors. 
%Using coordinate transformations between quantales, we enable the
%formulation of co-design problems in which different components can be 
% We use coordinate transformations between quantales to formulate
% co-design problems in which different components can be 
% evaluated in different quantales and in several quantales simultaneously. 
% This formalism provides the ability to evaluate components each
% on some combination of feasibility, monetary cost, lifespan,
% and/or durability, among myriad others. 
% We also provide numerous examples to show that
% this quantale-enriched approach unifies existing techniques in both the control/systems engineering and optimization communities.
% Numerical results implement this approach and show that it effectively
% explores the tradespaces that arise
% in designing a target-tracking system and a UAV. 
\end{abstract}

%%%%%%%%%%%%%%%%%%%%%%%%%%%%%
\section{Introduction}
%%%%%%%%%%%%%%%%%%%%%%%%%%%%%
Modern cyber-physical systems, including autonomous vehicles, robotic platforms, and mobility infrastructures, are often too complex to be designed monolithically, or even, in some cases, by humans at all.
Their development is necessarily distributed across heterogeneous subsystems such as sensing, actuation, control, planning, power, and communication. 
This engineering reality makes modularity indispensable: one would like to design, analyze, and replace components locally while retaining system-level guarantees. 
Achieving that goal requires a formal language that is expressive enough to model heterogeneous components and compositional enough to support their interconnection.

Monotone co-design provides one such language~\cite{censiMathematicalTheoryCoDesign2016,zardiniCoDesignComplexSystems2023,censi2026CategoriesCompositionality}.
In this framework, each component is modeled by the relation between the \emph{functionalities} it provides and the \emph{resources} it requires, while system architectures are built via series, parallel, and feedback composition.
The key abstraction is order-theoretic: resources and functionalities are modeled by ordered sets measuring the relative availability, quantity, demand, or preference over design choices. 
A design query then asks whether a given resource can realize a given functionality, modeled by a monotone function mapping a resource-function pair $(r,f)$ to a boolean ($\mathtt{true}$ or $\mathtt{false}$). Monotonicity, here, models the trade-offs accrued when choosing how to implement a design.
This perspective has enabled compositional design methods in robotics and controls, transportation, deep learning architectures, and formula-one racing~\cite{zardiniecc21,zardiniTaskdrivenModularCodesign2022,milojevic2025codei,zardini2022co,abbottFlashAttentionNapkinDiagrammatic2025,neumann2024co}.

However, in practice, feasibility is only one aspect of design evaluation. 
A component may admit several implementations, incur monetary cost, consume time or energy, or satisfy requirements only with limited confidence. 
In current practice, such quantities are often handled either by forcing them into the resource space as additional wires through ``uncurrying'' (see \Cref{fig:combined-series-design}) or by scalarizing several criteria into a single value. 
Both strategies are limiting: the former obscures the architecture of the design problem and can enlarge the set of intermediate solutions that must be propagated compositionally, while the latter can hide intrinsic trade-offs and impose arbitrary conversions between fundamentally different criteria~\cite{zardiniCoDesignComplexSystems2023}.
{
\newcommand{\figscale}{0.15}
\begin{figure}[t] % 't' or 'hp' is usually better for multi-part figures
    \centering

    % First Subfigure
    \begin{subfigure}[b]{\columnwidth}
        \centering
        \includegraphics[scale=\figscale]{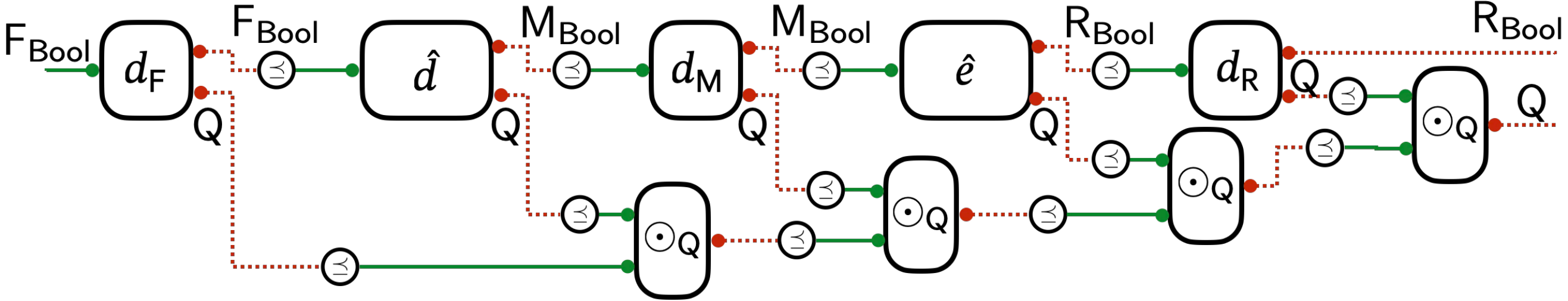}
        \caption{Uncurrying $d: \R \times \M \to \quant$ into $\tilde{d}: (\R \times \quant) \times \M \to \bool$.}
        \label{fig:hacked-co-design-series}
    \end{subfigure}
    \hfill % Adds horizontal spacing between the two
    % Second Subfigure
    \begin{subfigure}[b]{\columnwidth}
        \centering
        \includegraphics[scale=\figscale]{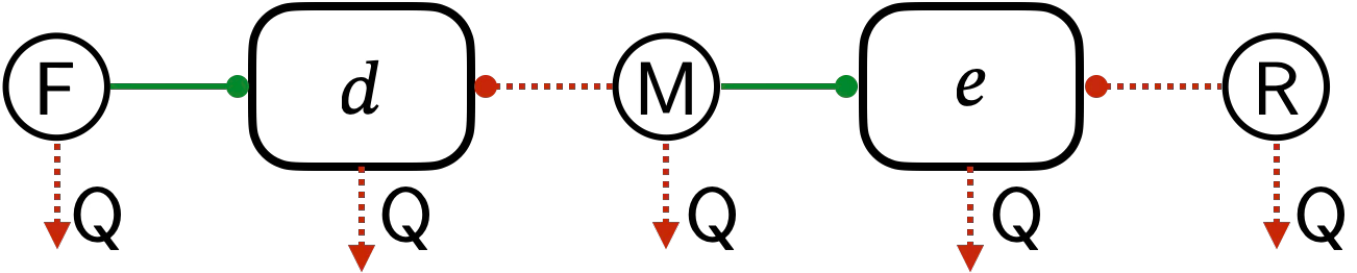}
        \caption{Quantale-enriched codesign: $d: \R \times \M \to \quant$.}
        \label{fig:q-design-series}
    \end{subfigure}
    \caption{Monotone co-design does not natively support non-boolean design evaluations; state-of-practice approaches such as uncurrying (a) obscure the underlying design problem compared with quantale-enriched modeling (b).} % Main figure caption
    \label{fig:combined-series-design}
    \vspace{-2em}
\end{figure}
}

This paper addresses that limitation by generalizing monotone co-design from boolean feasibility to quantale-valued evaluation.
Quantales serve as evaluation spaces for design queries: they are ordered sets equipped with a compatible binary product (\Cref{def:quantale}). 
The booleans~($\bool$) are the fundamental example, but many others are relevant for design, including quantales modeling cost~($\cost$), confidence~($\fuzz$), and tuples of criteria via product quantales~\cite{rosenthalQuantalesTheirApplications1990}. 
Replacing~$\bool$ by an arbitrary quantale~$\quant$ allows a component to return not only ``feasible''/``infeasible'' answers, but richer evaluations such as cost, lifespan, confidence, or combinations thereof.

In our framework, resources and functionalities are modeled as~$\quant$-enriched categories and design problems as~$\quant$-enriched profunctors.
This viewpoint subsumes several existing co-design formalisms. Classical monotone co-design is recovered when~$\quant=\bool$ (\Cref{subsec:monotone}); cost-aware co-design arises when~$\quant=\cost$ (\Cref{subsec:quantitative}); and implementation-aware co-design can be represented using product quantales (\Cref{subsec:co-design-with-imps}).
While the observation that monotone co-design can be viewed as a profunctor between~$\bool$-enriched categories is not new~\cite{fongInvitationAppliedCategory2019}, our contribution is to develop this perspective into a practical design framework with concrete examples of relevant enrichments, heterogeneous composition rules via change of base, and accompanying computational results. 
Category theory, already present in parts of the control and systems literature~\cite{amesCategoricalLyapunovTheory2025,haghverdiBisimulationRelationsDynamical2003,speranzonAbstractionCompositionContracts2018a,sheCharacterizingCompositionalityLQR2023a,ghristCategoricalDiffusionWeighted2026}, provides the mathematical structure underlying this generalization.

Another contribution of the paper is heterogeneity. 
Real systems are often evaluated locally using different semantics. 
For example, one component may be constrained by feasibility, another by cost, and another by implementation choice. 
To compose such components rigorously, we introduce change-of-base maps between quantales and show these maps induce compatible heterogeneous composition rules. 
This form of composition yields a principled way to integrate boolean feasibility, additive cost, uncertainty, and more within a single compositional framework.
This approach also preserves multiobjective structure when desired, rather than collapsing it through scalarization.

The remainder of the paper is organized as follows.
In \Cref{sec:homo}, we formalize quantale-enriched co-design and show (\Cref{thm:traced-monoidal}) that series, parallel, and feedback composition remain valid over arbitrary commutative quantales. 
In \Cref{sec:hetero}, we develop heterogeneous composition through change of base (\Cref{lem:change-of-base-cat}), showing (\Cref{thm:hetero-composition}) how subsystems with different local evaluation spaces can be interconnected. 
In \Cref{sec:numerical-example}, we illustrate the framework on a cost-based tracking example and a UAV case study.

\section{Quantale-Enriched Co-Design}
\label{sec:homo}
%%%%%%%%%%%%%%%%%%%%%%%%%%%%%%%%%%%%%%
This section extends monotone co-design from boolean feasibility to arbitrary quantales.
The resulting framework evaluates a design query not only by feasibility, but also by cost, confidence, or other compositional criteria.
We first recall quantales and their enriched categories, then define quantale-enriched design problems and their three basic composition operations.
% In this section, we generalize the framework of monotone co-design to enrichment over arbitrary quantales, providing a unified treatment of various design evaluation criteria such as feasibility, cost, and confidence. We begin by establishing the foundations of quantales and identifying the key examples relevant to co-design.

\subsection{Quantales}

A \emph{partial order}~$(Q, \sqsubseteq)$ consists of a set~$Q$ and a relation~$\sqsubseteq$ that is reflexive, transitive, and antisymmetric:
\begin{align*}
     & \forall p \in Q, \quad p \sqsubseteq p,                                                      & (\textit{reflexive})     \\
     & \forall p, q \in Q, \quad p \sqsubseteq q \land q \sqsubseteq p \implies p = q,              & (\textit{antisymmetric}) \\
     & \forall p, q, r \in Q, \quad p \sqsubseteq q \land q \sqsubseteq r \implies p \sqsubseteq r. & (\textit{transitive})
\end{align*}
% If $\sqsubseteq$ is reflexive and transitive, but not antisymmetric, it is called a \emph{preorder}.

Given a subset~$S \subseteq Q$, the \emph{join} (or supremum) of~$S$, denoted~$\bigsqcup S$, is its least upper bound with respect to the partial order ($\sqsubseteq$).
Dually, the \emph{meet} (or infimum) of~$S$, denoted~$\bigsqcap S$, is its greatest lower bound.
For any two elements~$p, q \in Q$, we write~$p \sqcup q$ for their join. 
A partial order~$(Q, \sqsubseteq)$ is a \emph{suplattice} if every subset~$S \subseteq Q$ has a join $\bigsqcup S$, which, in particular, implies that~$Q$ has a greatest element~$\top_Q = \bigsqcup Q$ and a least element~$\bot_Q = \bigsqcup \emptyset$.

\begin{definition} \label{def:quantale}
A \emph{quantale}~$\quant = (Q, \sqsubseteq, \odot, e)$ is a suplattice~$(Q, \sqsubseteq)$ equipped with an associative, commutative binary operation $\odot$ called the \emph{multiplication}, and a unit element $e \in Q$ such that
    \begin{equation*}
        p \odot (\textstyle\bigsqcup_{i \in I} q_i) = \textstyle\bigsqcup_{i \in I} (p \odot q_i)
    \end{equation*}
    for all $p \in \quant, \{q_i\}_{i \in I} \subseteq Q$.
\end{definition}

Commutativity means that $p \odot q = q \odot p$ for all $p,q \in \quant$, and the unit satisfies $e \odot p = p$ for all $p \in \quant$. 

Consider some examples of quantales:

\begin{example}[Boolean Feasibility] \label{eg:bool}
    The \emph{boolean quantale} 
    \[\bool = (\{ \mathtt{false}, \mathtt{true} \},\sqsubseteq, \wedge, \mathtt{true})
    \] 
    models binary feasibility.
    Its order is~$\mathtt{false} \sqsubseteq \mathtt{true}$, its join is logical disjunction $\bigsqcup$ ($\vee$), i.e.~``or,'' and its multiplication is logical conjunction~$\wedge$, i.e.~``and',' representing the simultaneous satisfaction of requirements.
    % The internal hom $[p, q]$ is ``if...then...'' ($\Rightarrow$).
\end{example}

\begin{example}[Hierarchical Feasibility] \label{eg:face}
    % A quantale $\quant$ is \emph{localic} if multiplication $\odot$ is given by the meet $\wedge$ of the underlying sup-lattice and the unit $e$ is the top element $\top$.
    $\bool$ is not the only finite quantale. 
    Consider an example used in emergency response or military applications to prioritize communications channel use~\cite{ryanShortNotePACE2013}.
    A \emph{PACE (Primary, Alternate, Contingency, and Emergency) plan} defines a quantale \[\pace = (\{P, A, C, E\},\preceq_{\pace},\wedge, P)\] where~$\preceq_{\pace}$ defines~$E \preceq_{\pace} C \preceq_{\pace} A \preceq_{\pace} P$. 
    While the values of~$\bool$ answer feasibility, the values of $\pace$ answer, ``Under what operational state does a design remain feasible?''
    The product models selecting the worst-case communication channel that achieves an operational goal.
\end{example}

\begin{example}[Cost] \label{eg:cost}
    The \emph{cost quantale} models the accumulation of quantitative resources, such as money, time, or energy: \[\cost = ([0, \infty], \geq, +, 0).\] 
    Note that the order is reversed:~$p \sqsubseteq q$ if $p \geq q$, making $0$ the maximal element. 
    The join $\bigsqcup$ is the infimum ($\inf$), and multiplication is addition ($+$).
    %The internal hom $[p, q]$ is the truncated subtraction which we denote $\left[ q - p \right]_{+}$.
\end{example}

\begin{example}[Confidence] \label{eg:fuzz}
    The interval~$[0, 1]$ equipped with multiplication $\odot$ forms a quantale \[\fuzz = ([0, 1], \leq, \odot, 1)\] where values represent degrees of confidence, with $1$ being supported by all evidence and $0$ being supported by no evidence. The join is the supremum ($\sup$). The product ($\odot$) is classically defined to be one of three ``t-norms'' : the \emph{Gődel t-norm} ($p \odot q = \min\{p, q\}$), the \emph{Goguen t-norm} ($p \odot q = pq$), and the \emph{\L{}ukasiewicz t-norm} ($p \odot q = \max\{0, p+q-1\}$)~\cite{klementTriangularNorms2000}. The choice of t-norm endows $\fuzz$ with different semantics.
\end{example}

\begin{example}[Implementation] \label{ex:product-quantale} \label{eg:prod}
    Quantales can be combined in several ways, including a product construction. Given any quantale $\quant$, the \emph{product quantale} is written $\textstyle\prod_{i \in \I}\quant$ where $\I$ is an indexing set. The product quantale is equivalent to the set of functions $\quant^{\I} \coloneqq \{\sigma: \I \to \quant\}$ equipped with the pointwise order, join, and multiplication operations. When the index set $\I$ is interpreted as a set of implementations, the product quantale can be used to track feasibility, confidence, or costs of implementations via $\bool^\I$, $\fuzz^\I$, or $\cost^\I$, respectively. Its notable that $\bool^\I$ is equivalent to the powerset $\P(\I)$, which forms a quantale under the usual inclusion order, set union as the join operator, and intersection as multiplication.
\end{example}

\subsection{Quantale-Enriched Categories}

Building on the algebraic structure of a quantale, we formalize the notion of a quantale-enriched category to generalize the notion of ordered sets of resources and functionalities from monotone co-design.

\begin{definition}
    A \emph{$\quant$-category} $\cat{C}$ consists of a set of objects $\obj(\cat{C})$ and, for every pair of objects $x, y$, a \emph{hom-value} $\homarg{x}{y}{\cat{C}}{\quant} \in \quant$ such that:
    \begin{enumerate}
        \item $e \sqsubseteq \homarg{x}{x}{\cat{C}}{\quant}$ for all $x \in \obj(\cat{C})$;
        \item $\homarg{x}{y}{\cat{C}}{\quant} \odot \homarg{y}{z}{\cat{C}}{\quant} \sqsubseteq \homarg{x}{z}{\cat{C}}{\quant}$ for all $x, y, z \in \obj(\cat{C})$.
    \end{enumerate}
\end{definition}

\begin{notation}
    For brevity, we often write $x \in \cat{C}$ to denote that $x$ is an object in the category $\cat{C}$, omitting the explicit $\obj(-)$ notation.
    % When the quantale is not clear from context, we write $\homarg{x}{y}{\cat{C}}{\quant}^{\quant}$ for the hom-value.
\end{notation}

\begin{example}[Orders]
    A $\bool$-category $\cat{C}$ is an order, where we write $x \preceq_{\cat{C}} y$ if and only if $\homarg{x}{y}{\cat{C}}{\bool} = \mathtt{true}$. In the context of monotone co-design, $\bool$-categories are precisely used to model spaces of resources and functionalities, with the relation $x \preceq_{\cat{C}} y$ indicating that a resource $y$ is ``at least as good'' as a resource $x$, whatever ``good'' means in a specific design context.
    If $\homarg{x}{y}{\cat{C}}{\bool} = \homarg{y}{x}{\cat{C}}{\bool} = \mathtt{true}$ for $x \neq y$, then the two resources are considered equivalent in terms of quality. Conversely, if $\homarg{x}{y}{\cat{C}}{\bool} = \homarg{y}{x}{\cat{C}}{\bool} = \mathtt{false}$ for $x \neq y$, then the two resources are incomparable.
\end{example}

\begin{example}[Metrics]
    A $\cost$-category $\cat{C}$ is a generalized metric space, where we interpret $\homarg{x}{y}{\cat{C}}{\cost}$ as the monetary cost required to upgrade or transition from a resource $x$ to a resource $y$. The identity axiom $0 \geq \homarg{x}{x}{\cat{C}}{\cost}$ ensures that upgrading a resource to itself incurs zero cost, while the composition law $\homarg{x}{y}{\cat{C}}{\cost} + \homarg{y}{z}{\cat{C}}{\cost} \geq \homarg{x}{z}{\cat{C}}{\cost}$ ensures that upgrading from $x$ to $z$ directly is never more expensive than upgrading through an intermediate resource $y$. Standard metric spaces are recovered when the cost is symmetric ($\homarg{x}{y}{\cat{C}}{\cost} = \homarg{y}{x}{\cat{C}}{\cost}$), finite ($\homarg{x}{y}{\cat{C}}{\cost} < \infty$), and separated ($x \neq y \implies \homarg{x}{y}{\cat{C}}{\cost} > 0$) \cite{lawvereMetricSpacesGeneralized1973}.
\end{example}
% \begin{example}
%     Let $\I$ be a set of available actions or operational modes. A $\P(\I)$-category $\cat{C}$ assigns to each pair of states $x, y$ a subset $\homarg{x}{y}{\cat{C}}{\P(\I)} \subseteq \I$ representing the specific actions capable of validly transitioning the system from $x$ to $y$. The identity axiom $\I \subseteq \homarg{x}{x}{\cat{C}}{\P(\I)}$ dictates that every available action trivially satisfies the requirement to maintain the current state. The composition law $\homarg{x}{y}{\cat{C}}{\P(\I)} \cap \homarg{y}{z}{\cat{C}}{\P(\I)} \subseteq \homarg{x}{z}{\cat{C}}{\P(\I)}$ ensures that if a particular action $i \in \I$ is sufficient to drive the transition from $x$ to $y$, and is also sufficient to drive the transition from $y$ to $z$, then that single action $i$ is capable of the full transition from $x$ to $z$.
% \end{example}

\begin{example}[Tesnor Product] \label{def:tensor-product}
    We can form $\quant$-categories from other $\quant$-categories. Given two $\quant$-categories $\cat{C}$ and $\cat{D}$, their \emph{tensor product} $\cat{C} \otimes \cat{D}$ is defined to be the $\quant$-category where $\obj(\cat{C} \otimes \cat{D}) \coloneqq \obj(\cat{C}) \times \obj(\cat{D})$ and hom-values are assigned pointwise:
    \begin{equation}
        \homarg{(c, d)}{(c', d')}{\cat{C} \otimes \cat{D}}{\quant} \coloneqq \homarg{c}{c'}{\cat{C}}{\quant} \odot \homarg{d}{d'}{\cat{D}}{\quant}.
    \end{equation}
\end{example}

\begin{comment}
    The \emph{opposite category} $\cat{C}^{\mathrm{op}}$ is defined to be the $\quant$-category where $\obj(\cat{C}^{\mathrm{op}}) \coloneqq \obj(\cat{C})$ and hom-values are assigned as $\homarg{x}{y}{\cat{C}^{\mathrm{op}}}{\quant} \coloneqq \homarg{y}{x}{\cat{C}}{\quant}$. \hr{Do we need this definition/example? GZ: Probably not, too deep for this short paper}
\end{comment}

\subsection{Quantale-Enriched Co-Design}

We can now replace boolean feasibility relations by quantale-valued design queries.

\begin{definition}
\label{def:dp}
Let~$\quant = (Q,\sqsubseteq,\odot,e)$ be a quantale and let~$\R$ and $\F$ be $\quant$-categories.
A \emph{$\quant$-design problem} (equivalently, a $\quant$-enriched profunctor) is a tuple~$(\R,\F,\quant,d)$ where~$d: \obj(\R) \times \obj(\F) \to \quant$ is a function such that
    \begin{align} \label{eq:bimodule}
        \homarg{f^\ast}{f}{\F}{\quant} \odot d(r,f) \odot \homarg{r}{r^\ast}{\R}{\quant} \sqsubseteq d(r^\ast,f^\ast)
    \end{align}
    for all~$f, f^\ast \in \F$ and for all $r,r^\ast \in \R$.
    A \emph{quantale-enriched design problem}, or \emph{$\quant$-design problem},  is defined to be the data of a $\quant$-profunctor in which the objects of $\R$ are interpreted as resources, the objects of $\F$ as interpreted as functionalities, and a value $d(r,f) \in \quant$ is interpreted as a measure of how well the query $(r,f) \in \R \times \F$ is satisfied.
\end{definition}
\begin{notation}
    We refer to $\quant$-design problems and $\quant$-profunctors interchangeably. To emphasize that these problems can be composed, we write $\R \xslashedrightarrow{d}_{\quant} \F$ for the profunctor in \eqref{eq:bimodule}.
\end{notation}

An interpretation of \eqref{eq:bimodule} is in order. The left-hand side accumulates three factors: the degree $\homarg{f^\ast}{f}{\F}{\quant}$ to which the desired functionality can be relaxed from $f$ to $f^\ast$, the satisfaction value $d(r,f)$ of the original query, and the degree $\homarg{r}{r^\ast}{\R}{\quant}$ to which the resource can be upgraded from $r$ to $r^\ast$. The inequality asserts that this composite can never exceed the satisfaction value $d(r^\ast,f^\ast)$ of the improved query.

% \hr{I'm not sure we should include internal hom, but \Cref{lem:dp-alt} is important.}
% It is well known (e.g., \cite[Sec.~3]{abramskyQuantalesObservationalLogic1993}) that a quantale is equipped with a dual binary operation called the \emph{internal hom}, defined $[p, q] \coloneqq \bigsqcup \{ r \in \quant \mid p \odot r \sqsubseteq q \}$ such that $r \odot p \sqsubseteq q$ if and only if $r \sqsubseteq [p, q]$ for all $p, q, r \in \quant$.
% By applying this property, we obtain the following equivalent definition.
\begin{lemma} \label{lem:dp-alt}
    A $\quant$-design problem is equivalent to the data of a tuple $(\R,\F,\quant,d)$ such that $d: \obj(\R) \times \obj(\F) \to \quant$ is a function satisfying
    \begin{align}
        \homarg{f^\ast}{f}{\F}{\quant} \odot \homarg{r}{r^\ast}{\R}{\quant} \sqsubseteq [d(r,f),d(r^\ast,f^\ast)] \label{eq:bimodule-alt}
        % \homarg{f^\ast}{f}{\F}{\quant} \odot \homarg{r}{r^\ast}{\R}{\quant} \sqsubseteq \textstyle\bigsqcup \{r \in \quant~\vert~r \odot d(r,f) \sqsubseteq d(r^\ast,f^\ast)\} \label{eq:bimodule-alt}
    \end{align}
    for all $f,f^\ast \in \F$ and for all $r,r^\ast \in \R$, where $[-,-]$ is the \emph{internal hom} (e.g., \cite[Sec.~3]{abramskyQuantalesObservationalLogic1993}) defined by $[p, q] \coloneqq \bigsqcup \{ r \in \quant \mid p \odot r \sqsubseteq q \}$.\footnote{For proofs of all theorems and lemmas, see the Appendix.%
    }
\end{lemma}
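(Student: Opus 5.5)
The plan is to reduce the statement to the adjunction between multiplication and the internal hom in a commutative quantale. Then \eqref{eq:bimodule} and \eqref{eq:bimodule-alt} become two instances of the same equivalence, applied pointwise for each fixed choice of $f,f^\ast,r,r^\ast$.

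First, fix $p,q\in\quant$ and show that for every $s\in\quant$,
\[
s\odot p\sqsubseteq q \iff s\sqsubseteq [p,q].
\]
For the forward direction, $s$ lies in the set $\{r\in\quant \mid p\odot r\sqsubseteq q\}$, using commutativity to rewrite $s\odot p=p\odot s$. Hence $s$ is below the join of that set, which is $[p,q]$ by definition. For the converse, suppose $s\sqsubseteq[p,q]$. Multiplication is monotone, since $a\sqsubseteq b$ means $a\sqcup b=b$, and distributivity gives $p\odot b=(p\odot a)\sqcup(p\odot b)$. Therefore $p\odot s\sqsubseteq p\odot[p,q]$. By the distributivity axiom of \Cref{def:quantale},
\[
p\odot[p,q]=\textstyle\bigsqcup\{p\odot r \mid p\odot r\sqsubseteq q\}\sqsubseteq q,
\]
which gives $s\odot p\sqsubseteq q$.

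This converse direction is the only step with real content: it is exactly where the sup-preservation axiom of \Cref{def:quantale} is needed. I expect it to be the main, though mild, obstacle, so I will state it explicitly rather than cite it.

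Second, apply the equivalence with
\[
s=\homarg{f^\ast}{f}{\F}{\quant}\odot\homarg{r}{r^\ast}{\R}{\quant},\qquad p=d(r,f),\qquad q=d(r^\ast,f^\ast).
\]
Associativity and commutativity of $\odot$ let us rewrite the left-hand side of \eqref{eq:bimodule} as $s\odot p$. So \eqref{eq:bimodule} holds if and only if $s\sqsubseteq[p,q]$, which is precisely \eqref{eq:bimodule-alt}. Since the quadruple $f,f^\ast,r,r^\ast$ was arbitrary, a function $d$ satisfies the defining condition of \Cref{def:dp} exactly when it satisfies \eqref{eq:bimodule-alt}. The two descriptions therefore carry the same data $(\R,\F,\quant,d)$, which is the claimed equivalence.
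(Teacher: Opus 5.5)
Your proposal is correct and follows the same route as the paper. Both arguments reduce \eqref{eq:bimodule} $\Leftrightarrow$ \eqref{eq:bimodule-alt} to the adjunction $s\odot p\sqsubseteq q \iff s\sqsubseteq[p,q]$, instantiated at $s=\homarg{f^\ast}{f}{\F}{\quant}\odot\homarg{r}{r^\ast}{\R}{\quant}$, $p=d(r,f)$, $q=d(r^\ast,f^\ast)$. The paper simply recalls this adjunction, whereas you derive it from the definition of $[-,-]$ via commutativity and sup-preservation, and you also make explicit the associativity/commutativity rearrangement of the left-hand side that the paper leaves implicit.
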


The fundamental operations of monotone co-design, namely series composition, parallel composition, and feedback, generalize naturally from booleans to any quantale $\quant$.

\begin{description}
    \item[Series composition] allows us to chain design problems through an intermediate space $\M$. Given $d_1: \R \xslashedrightarrow{} \M$ and $d_2: \M \xslashedrightarrow{} \F$, the composite problem $(d_1;d_2): \R \xslashedrightarrow{} \F$ is defined as
          \begin{equation} \label{eq:q-series}
              (d_1;d_2)(r, f) \coloneqq \bigsqcup_{m \in \M} d_1(r, m) \odot d_2(m, f).
          \end{equation}
    \item[Parallel composition] describes systems operating independently. Given $d: \R \xslashedrightarrow{} \F$ and $d': \R' \xslashedrightarrow{} \F'$, their product $(d \otimes d'): \R \otimes \R' \xslashedrightarrow{} \F \otimes \F'$ is given by
          \begin{equation} \label{eq:q-parallel}
              (d \otimes d')((r, r'), (f, f')) \coloneqq d(r, f) \odot d'(r', f').
          \end{equation}
          Here, the resource requirements and functionalities are pooled, and the resulting feasibility or cost is the product of the individual components.
    \item[Feedback] (or trace) is used to model internal loops where a functionality is fed back as a resource. For a design problem $d: \R \otimes \M \xslashedrightarrow{} \F \otimes \M$, the feedback $\Tr_{\M}(d): \R \xslashedrightarrow{} \F$ is defined as $\Tr_{\M}(d)(r, f) \coloneqq$
          \begin{equation} \label{eq:q-feedback}
              \begin{aligned}
                   \textstyle\bigsqcup_{m, m' \in \M} d((r, m), (f, m')) \odot \homarg{m}{m'}{\M}{\quant}.
              \end{aligned}
          \end{equation}
          This operation determines the optimal internal feedback loop by jointly evaluating the subsystem's core performance $d\bigl((r,m),(f,m')\bigr)$ alongside the intrinsic validity (or cost) of routing the generated internal output $m$ back as the required internal input $m'$, captured by $\homarg{m}{m'}{\M}{\quant}$.
\end{description}

\begin{theorem} \label{thm:traced-monoidal}
    For any quantale $\quant$, the composition operations defined in \eqref{eq:q-series}, \eqref{eq:q-parallel}, 
    and~\eqref{eq:q-feedback} all produce valid $\quant$-design problems.
\end{theorem}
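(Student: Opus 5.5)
We verify the bimodule inequality \eqref{eq:bimodule} directly for each of the three constructions. The tools are the $\quant$-category axioms for $\R$, $\M$, $\F$, associativity and commutativity of $\odot$, and the fact that $\odot$ distributes over arbitrary joins. We also use monotonicity of $\odot$ in each argument, which follows from distributivity: if $p \sqsubseteq q$ then $a \odot q = a \odot (p \sqcup q) = (a\odot p) \sqcup (a \odot q)$. Throughout, we fix $f,f^\ast$ and $r,r^\ast$ and show that the left-hand side of \eqref{eq:bimodule} lies below the value at $(r^\ast,f^\ast)$.

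\textbf{Series.} Distributivity (applied twice, using commutativity) gives
\begin{equation*}
\homarg{f^\ast}{f}{\F}{\quant} \odot (d_1;d_2)(r,f) \odot \homarg{r}{r^\ast}{\R}{\quant} = \textstyle\bigsqcup_{m\in\M} \bigl(d_1(r,m)\odot\homarg{r}{r^\ast}{\R}{\quant}\bigr) \odot \bigl(\homarg{f^\ast}{f}{\F}{\quant}\odot d_2(m,f)\bigr).
\end{equation*}
For $d_1$, apply \eqref{eq:bimodule} with $m^\ast = m$, multiplying by $e \sqsubseteq \homarg{m}{m}{\M}{\quant}$ and using monotonicity; this bounds the first factor by $d_1(r^\ast,m)$. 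By the same argument, the second factor is bounded by $d_2(m,f^\ast)$. Each term is therefore at most $d_1(r^\ast,m)\odot d_2(m,f^\ast)$, so the join is at most $(d_1;d_2)(r^\ast,f^\ast)$.

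\textbf{Parallel.} By the definition of the tensor product in \Cref{def:tensor-product}, the hom-values split as products. Rearranging with associativity and commutativity, the left-hand side becomes
\begin{equation*}
\bigl(\homarg{f^\ast}{f}{\F}{\quant}\odot d(r,f)\odot\homarg{r}{r^\ast}{\R}{\quant}\bigr)\odot\bigl(\homarg{f'^\ast}{f'}{\F'}{\quant}\odot d'(r',f')\odot\homarg{r'}{r'^\ast}{\R'}{\quant}\bigr).
\end{equation*}
Applying \eqref{eq:bimodule} to each factor and using monotonicity yields the bound $d(r^\ast,f^\ast)\odot d'(r'^\ast,f'^\ast)$.

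\textbf{Feedback.} Again distribute over the join in \eqref{eq:q-feedback}. For each pair $(m,m')$, the tensor-product hom-values
\begin{equation*}
\homarg{(r,m)}{(r^\ast,m)}{\R\otimes\M}{\quant} \sqsupseteq \homarg{r}{r^\ast}{\R}{\quant}\odot e = \homarg{r}{r^\ast}{\R}{\quant}
\end{equation*}
(and the analogous bound for $(f^\ast,m')$ against $(f,m')$) let us apply the bimodule property of $d$ with the $\M$-coordinates held fixed. This bounds each term by $d((r^\ast,m),(f^\ast,m'))\odot\homarg{m}{m'}{\M}{\quant}$, a term of the join defining $\Tr_\M(d)(r^\ast,f^\ast)$.

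The main point of care is the recurring step of introducing identity hom-values via $e \sqsubseteq \homarg{x}{x}{}{}$ together with monotonicity of $\odot$, and keeping factors correctly ordered. Commutativity of $\quant$ is essential here, in the parallel case and when moving $\homarg{m}{m'}{\M}{\quant}$ past other factors. Otherwise each case is a routine computation.
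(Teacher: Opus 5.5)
Your proposal is correct and follows essentially the same route as the paper's appendix proof: in each case you distribute over the join, regroup the factors using commutativity, and apply the bimodule condition with the intermediate coordinates held fixed, discarding the identity hom-values via $e \sqsubseteq \homarg{x}{x}{\cat{C}}{\quant}$. The only difference is that you derive monotonicity of $\odot$ explicitly from join-distributivity, a step the paper uses without comment.
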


Collectively, these operations imbue the collection of $\quant$-categories (as objects) and $\quant$-design problems (as morphisms) with the formal structure of a traced monoidal category \cite{joyalTracedMonoidalCategories1996}.
A direct practical consequence of this categorical structure is that complex, interconnected $\quant$-design problems can be rigorously modeled and manipulated using a graphical language called string diagrams \cite{fongInvitationAppliedCategory2019}, much in the way that is already done in monotone co-design \cite{zardiniCoDesignComplexSystems2023}.

\subsection{Monotone Co-Design} \label{subsec:monotone}

The theory of monotone co-design \cite{zardiniCoDesignComplexSystems2023} is precisely recovered by $\bool$-enriched design problems. In a $\bool$-design problem, a $\bool$-category $\R$ is an ordered set of resources $(R, \preceq_{\R})$, where the resource $r^\ast$ is ``better than'' $r$ if $r \preceq_{\R} r^\ast$. Similarly, a $\bool$-category $\F$ is an ordered set of functionalities $(F, \preceq_{\F})$, where the functionality $f^\ast$ is ``better than'' $f$ if $f \preceq_{\F} f^\ast$. The query $d: \R \times \F \to \bool$ assigns a truth value to each pair $(r,f)$, where $d(r,f) = \mathtt{true}$ if functionality $f$ is feasible given resource $r$.
\Cref{lem:dp-alt} implies
\begin{equation*}
    (f^\ast \preceq_{\F} f) \wedge (r \preceq_{\R} r^\ast) \implies \bigl( d(r, f) \Rightarrow d(r^\ast, f^\ast)\bigr)
\end{equation*}
for all $r, r^\ast \in \R$ and $f, f^\ast \in \F$, ensuring that feasibility is preserved when moving to a lesser or equal functionality and/or a better or equal resource.
Finally, the generalized composition operations recover the standard boolean ones. For instance, the series composition of \eqref{eq:q-series} becomes
\begin{equation*}
    (d_1; d_2)(r, f) = \bigvee_{m \in \M} d_1(r, m) \wedge d_2(m, f),
\end{equation*}
yielding $\mathtt{true}$ if and only if there exists some intermediate resource $m$ that is both feasible given $r$ and sufficient to provide $f$.
Similarly, the parallel composition of two boolean design problems evaluates to
\begin{equation*}
    (d_1 \otimes d_2)\bigl((r_1, r_2), (f_1, f_2)\bigr) = d_1(r_1, f_1) \wedge d_2(r_2, f_2),
\end{equation*}
which is $\mathtt{true}$ if and only if both design problems are independently feasible. Furthermore, the feedback operation over an internal state space $\M$ simplifies via the ordered hom-object $\homarg{m_1}{m_2}{\M}{\bool} \coloneqq (m_1 \preceq_{\M} m_2)$ to
\begin{equation*}
    \Tr_{\M}(d)(r, f) = \bigvee_{m_1 \preceq_{\M} m_2} d\bigl((r, m_1), (f, m_2)\bigr).
\end{equation*}
Because $d$ is monotone increasing in required resources, this supremum reduces exactly to $\bigvee_{m \in \M} d\bigl((r, m), (f, m)\bigr)$, establishing feasibility by verifying if there exists at least one valid internal resource $m$ capable of sustaining the feedback loop.

\subsection{Quantitative Co-Design} \label{subsec:quantitative}

When evaluated in $\cost = ([0,\infty], \ge, +, 0)$, the quantale-enriched co-design framework evaluates designs subject to costs. A $\cost$-category $\R$ describes a space of resources where the hom-object $\homarg{r}{r'}{\R}{\cost} \in [0, \infty]$ quantifies the minimum cost or effort to convert resource $r'$ into $r$. Similarly, a $\cost$-category $\F$ describes a space of functionalities where the hom-object $\homarg{f}{f'}{\F}{\cost} \in [0, \infty]$ quantifies the minimum cost or effort to convert functionality $f$ into $f'$. The categorical identity axiom enforces that there is no cost 
to leave functionalities and resources unchanged, 
i.e., $\homarg{r}{r}{\R}{\cost} = 0$ and
$\homarg{f}{f}{\F}{\cost} = 0$ for all~$r \in \R$
and~$f \in \F$. 
 The composition axiom dictates that conversions systematically satisfy the triangle inequality ($\homarg{r}{r'}{\R}{\cost} + \homarg{r'}{r''}{\R}{\cost} \ge \homarg{r}{r''}{\R}{\cost}$).
A $\cost$-design problem $d: \R \xslashedrightarrow{} \F$ assigns a $d(r, f) \in [0, \infty]$, representing the cost to implement a component that provides functionality $f$ when supplied with resource $r$. The condition of \Cref{lem:dp-alt} implies the bound
\begin{equation*}
    \homarg{f^\ast}{f}{\F}{\cost} + d(r, f) + \homarg{r}{r^\ast}{\R}{\cost} \ge d(r^\ast, f^\ast)
\end{equation*}
for all $r, r^\ast \in \R$ and $f, f^\ast \in \F$. Analogous conditions follow for series composition and feedback.

%%%%%%%%%%%%%%%%%%%%%%%%%%%%%%%%%%%
\section{Heterogeneous Co-Design}
\label{sec:hetero}
%%%%%%%%%%%%%%%%%%%%%%%%%%%%%%%%%%%

% In practice, engineered systems are inherently multi-objective, evaluated against disparate criteria such as monetary cost, binary feasibility, and varying degrees of confidence that a design will ``work.''
Designers often evaluate different parts of a design using different quantales.
In this section, we construct a framework for heterogeneous co-design by relaxing the requirement from \Cref{sec:homo} that interconnected models share a single underlying evaluation space.

%%%%%%%%%%%%%%%%%%%%%%%%%%%%%%%%%%%
\subsection{Change of Base}
\label{sec:change-of-base}
%%%%%%%%%%%%%%%%%%%%%%%%%%%%%%%%%%%

We formalize mappings between different evaluation spaces using structure-preserving maps between quantales, a procedure known as ``change of base'' \cite{borceuxHandbookCategoricalAlgebra1994}.
% In engineering practice, mapping disjoint evaluations (e.g., performance and risk) into a unified unit is commonly referred to as scalarization.

\begin{definition} \label{def:lax}
    Suppose $\quant = (Q,\sqsubseteq,\odot,e)$ and $\quant' = (Q',\sqsubseteq',\odot',e')$ are quantales.
    A \emph{lax function} $\phi: \quant \to \quant'$ between quantales is a monotone function $\phi: (Q,\sqsubseteq) \to (Q',\sqsubseteq')$ such that
    \begin{align} \label{eq:lax-1}
        \phi(q_1) \odot' \phi(q_2) & \sqsubseteq' \phi(q_1 \odot q_2)
    \end{align}
    for all $q_1, q_2 \in Q$, and
    \begin{align} \label{eq:lax-2}
        e' & \sqsubseteq' \phi(e).
    \end{align}
    The function $\phi$ is said to be \emph{strict} if \eqref{eq:lax-1} and~\eqref{eq:lax-2} hold with equality and \emph{oplax} if the orders are reversed.
\end{definition}

\begin{example}[Cost-to-Boolean]\label{eg:quant2bool}
    A common requirement is to project cost-based evaluations into boolean feasibility via a lax function $\phi: \cost \to \bool$. The axioms of \Cref{def:lax} imply that the set of feasible costs $\{ c \in [0, \infty] \mid \phi(c) = \mathtt{true} \}$ forms a downward-closed interval (under the standard real order) that is also closed under addition, and must contain the unit $0$. There are exactly three such intervals, yielding the three possible lax functions from $\cost$ to $\bool$:
    \begin{enumerate}
        \item Any finite cost is feasible:
              \begin{equation} \label{eq:cost2bool}
                  \phi(c) = \begin{cases}
                      \mathtt{false} & c = \infty \\
                      \mathtt{true}  & c < \infty
                  \end{cases}
              \end{equation}
        \item Only free implementations are feasible:
              \begin{equation} \label{eq:cost2bool-only-one-feas}
                  \phi(c) = \begin{cases}
                      \mathtt{false} & c > 0 \\
                      \mathtt{true}  & c = 0
                  \end{cases}
              \end{equation}
        \item Everything is feasible: $\phi(c) = \mathtt{true}$ for all $c \in \cost$.
    \end{enumerate}
    This classification demonstrates how scalarizing continuous metrics into boolean constraints intrinsically forces an engineer to adopt absolute and potentially conservative thresholds.
\end{example}

\begin{comment}
\begin{example}\label{eg:powerset2bool}
    Let $\I$ be a set of implementations. Consider the powerset quantale $\P(\I) = (\P(\I), \subseteq, \cap, \I)$, where the join operation is set union ($\cup$). A common operation is projecting a subset of valid implementations $S \in \P(\I)$ into a boolean feasibility metric via the map $\phi: \P(\I) \to \bool$, defined as $\phi(S) \coloneqq \mathtt{true}$ if $S \neq \emptyset$, and $\mathtt{false}$ otherwise. This is not a lax function because
    \begin{equation} \label{eq:powerset2bool-not-lax}
        \phi(S \cup T) \sqsubseteq \phi(S) \cap \phi(T)
    \end{equation}
    does not hold for all $S, T \in \P(\I)$.
\end{example}
\end{comment}

\begin{example}[Cost-to-Cost]\label{eg:cost2cost}
    Any monotone concave function $\phi: [0,\infty) \to [0,\infty)$ with $\phi(0) = 0$ extends uniquely to a lax function on $\cost$ by setting $\phi(\infty) = \infty$. For example, $\phi_K(r) = K \cdot r$ for any $K > 0$ is strict monoidal. In co-design, we can interpret lax functions $\phi: \cost \to \cost$ as a way to change units, e.g., from seconds to minutes or from dollars to Swiss francs.
\end{example}

\begin{example}[From Boolean] \label{eg:bool2quant}
    Conversely, one must often embed boolean feasibility constraints evaluated in $\bool$ into a richer quantitative space $\quant$ so they can be composed alongside quantitative components. The canonical change of base is the strict function $\phi: \bool \to \quant$ mapping $\mathtt{true}$ to the monoidal unit $e_{\quant}$ and $\mathtt{false}$ to the bottom element $\bot_{\quant}$. For $\quant = \cost$, where $e_{\cost} = 0$ and $\bot_{\cost} = \infty$, this yields:
    \begin{equation} \label{eq:bool2cost}
        \phi(b) = \begin{cases}
            0      & b = \mathtt{true}  \\
            \infty & b = \mathtt{false}
        \end{cases}. 
    \end{equation}
    This embedding transforms any strict structural constraint into a $\cost$-design problem: satisfying the constraint incurs zero penalty, while violating it incurs an infinite penalty.
\end{example}

Lax functions are maps with both the necessary and sufficient structure to transform all categories enriched in one quantale into categories enriched in another.

\begin{lemma}[{\cite[Chapter 6.4]{borceuxHandbookCategoricalAlgebra1994}}] \label{lem:change-of-base-cat}
    Suppose $\quant$ and $\quant'$ are quantales and $\phi: \quant \to \quant'$ is a monotone function. Define the \emph{pushforward} data $\phi_{\ast}\cat{C}$ by $\obj(\phi_{\ast}\cat{C}) \coloneqq \obj(\cat{C})$ and $\homarg{x}{y}{\phi_{\ast}\cat{C}}{\quant'}\coloneqq \phi\bigl(\homarg{x}{y}{\cat{C}}{\quant}\bigr)$. Then $\phi$ is lax if and only if, for every $\quant$-category $\cat{C}$, the pushforward $\phi_{\ast}\cat{C}$ is a $\quant'$-category.
\end{lemma}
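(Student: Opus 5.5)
The proof splits into the two directions of the biconditional. In each, the two $\quant'$-category axioms correspond to the two lax conditions \eqref{eq:lax-2} and \eqref{eq:lax-1}, with monotonicity of $\phi$ linking them.

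\emph{($\Rightarrow$)} Assume $\phi$ is lax and let $\cat{C}$ be a $\quant$-category.

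For the identity axiom, start from $e \sqsubseteq \homarg{x}{x}{\cat{C}}{\quant}$. Monotonicity of $\phi$ together with \eqref{eq:lax-2} then gives
\[
e' \sqsubseteq' \phi(e) \sqsubseteq' \phi\bigl(\homarg{x}{x}{\cat{C}}{\quant}\bigr) = \homarg{x}{x}{\phi_\ast\cat{C}}{\quant'} .
\]

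For the composition axiom, combine \eqref{eq:lax-1} with monotonicity applied to the $\quant$-composition law:
\[
\phi\bigl(\homarg{x}{y}{\cat{C}}{\quant}\bigr) \odot' \phi\bigl(\homarg{y}{z}{\cat{C}}{\quant}\bigr) \sqsubseteq' \phi\bigl(\homarg{x}{y}{\cat{C}}{\quant} \odot \homarg{y}{z}{\cat{C}}{\quant}\bigr) \sqsubseteq' \phi\bigl(\homarg{x}{z}{\cat{C}}{\quant}\bigr).
\]
This direction is routine.

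\emph{($\Leftarrow$)} Assume every pushforward is a $\quant'$-category. The plan is to feed in carefully chosen test $\quant$-categories from which each lax inequality can be read off.

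For \eqref{eq:lax-2}, take the one-object $\quant$-category with hom-value $\homarg{\singleton}{\singleton}{}{\quant} = e$. This is a $\quant$-category since $e \odot e = e$. The identity axiom of its pushforward yields exactly $e' \sqsubseteq' \phi(e)$.

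For \eqref{eq:lax-1}, fix $q_1, q_2 \in Q$ and build a three-object $\quant$-category on $\{x,y,z\}$. Set
\[
\homarg{x}{y}{}{\quant} = q_1, \qquad \homarg{y}{z}{}{\quant} = q_2, \qquad \homarg{x}{z}{}{\quant} = q_1 \odot q_2 .
\]
Put $e$ on the diagonal and $\bot$ on the backward pairs $(y,x)$, $(z,y)$, $(z,x)$. The composition axiom of the pushforward at $(x,y,z)$ then gives exactly $\phi(q_1)\odot'\phi(q_2) \sqsubseteq' \phi(q_1\odot q_2)$.

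The main obstacle is verifying that this test object really is a $\quant$-category. Any composite passing through a backward pair contains a factor $\bot$. Since $\odot$ distributes over the empty join, $p \odot \bot = \bot$, so every such composite is $\bot$ and the composition inequality holds trivially. Composites through the diagonal reduce, via the unit, to $e \odot p = p$. The only nontrivial triple is $(x,y,z)$, which holds with equality by construction.

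A subtlety is that the diagonal hom-values need $e \sqsubseteq \homarg{x}{x}{}{\quant}$, while composites such as $\homarg{x}{x}{}{\quant}\odot\homarg{x}{y}{}{\quant}$ must stay below $\homarg{x}{y}{}{\quant}$. Choosing exactly $e$ on the diagonal satisfies both requirements at once. I would check all $27$ triples by this case split (diagonal, backward, forward) to complete the proof.
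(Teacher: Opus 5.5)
Your proposal is correct and follows the paper's proof essentially step for step: the forward direction uses \eqref{eq:lax-2} and \eqref{eq:lax-1} together with monotonicity of $\phi$, and the converse uses the same two test objects, namely the one-object $\quant$-category with hom-value $e$ and the three-object $\quant$-category with hom-values $q_1$, $q_2$, $q_1 \odot q_2$, with $e$ on the diagonal and $\bot$ elsewhere. Your planned case split (diagonal, backward, forward) over all $27$ triples to confirm that the three-object test category is valid is slightly more thorough than the paper, which only appeals to $\bot$ being absorbing.
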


% Given the data of lax maps from the enriching quantles of the resource and functionality categories to the common quantale, we obtain a quantale-enriched design problem.

%%%%%%%%%%%%%%%%%%%%%%%%%%%%%%%%%%%%%%%%%%%%%%%%%
\subsection{Heterogeneous Composition}
\label{sec:hetero-comp}
%%%%%%%%%%%%%%%%%%%%%%%%%%%%%%%%%%%%%%%%%%%%%%%%%

The following general definition will guide the remainder of this section.

\begin{definition} \label{def:hetero-dp}
    A \emph{heterogeneous design problem} is an interconnected network of components, where each constituent component $i$ is modeled as a $\quant_i$-design problem $d_i: \R_i \xslashedrightarrow{}_{\quant_i} \F_i$, such that the evaluation quantales $\quant_i$ can be different
    for different components. 
\end{definition}

To rigorously compose the constituents of a heterogeneous design problem, quantales must be related via lax functions.

\begin{lemma}\label{lem:heterogenous-co-design}
    Given a $\quant_f$-category $\F$, a $\quant_r$-category $\R$,  and lax-functions $\phi_f \colon \quant_f \to \quant$ and $\phi_r \colon \quant_r \to \quant$, a map $d \colon \obj(\R) \times \obj(\F) \to \quant$ satisfying
    \begin{equation}\label{eq:heterogenous-co-design}
        \phi_f(\homarg{f^\ast}{f}{\F}{\quant_f}) \odot d(r,f) \odot \phi_r(\homarg{r}{r^\ast}{\R}{\quant_r}) \sqsubseteq d(r^\ast,f^\ast)
    \end{equation}
    defines a $\quant$-design problem $(\phi_r)_\ast\R \xslashedrightarrow{d} (\phi_f)_\ast\F$.
\end{lemma}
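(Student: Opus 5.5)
The argument is a direct assembly of \Cref{lem:change-of-base-cat} with \Cref{def:dp}; no new estimates are needed. It has two steps: first check that the source and target of the purported profunctor are genuine $\quant$-categories, then check that $d$ satisfies the bimodule inequality \eqref{eq:bimodule} with respect to those categories.

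\textbf{Step 1: the pushforwards are $\quant$-categories.} Since $\phi_r$ and $\phi_f$ are lax (in particular monotone), the ``only if'' direction of \Cref{lem:change-of-base-cat} shows that $(\phi_r)_\ast\R$ and $(\phi_f)_\ast\F$ are $\quant$-categories. Their objects are $\obj(\R)$ and $\obj(\F)$, and their hom-values are
\[
\homarg{r}{r^\ast}{(\phi_r)_\ast\R}{\quant} = \phi_r\bigl(\homarg{r}{r^\ast}{\R}{\quant_r}\bigr),
\qquad
\homarg{f^\ast}{f}{(\phi_f)_\ast\F}{\quant} = \phi_f\bigl(\homarg{f^\ast}{f}{\F}{\quant_f}\bigr).
\]
For completeness, I would recall briefly why this holds. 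The identity axiom follows from $e \sqsubseteq \phi(e_{\quant_i}) \sqsubseteq \phi(\homarg{x}{x}{}{})$, using \eqref{eq:lax-2} and monotonicity. The composition axiom follows from
\[
\phi(a)\odot\phi(b) \sqsubseteq \phi(a\odot_i b) \sqsubseteq \phi(\homarg{x}{z}{}{}),
\]
using \eqref{eq:lax-1} and monotonicity.

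\textbf{Step 2: the bimodule condition.} Here I would observe that $d$ has the correct type $\obj((\phi_r)_\ast\R)\times\obj((\phi_f)_\ast\F)\to\quant$, because pushing forward leaves objects unchanged. Substituting the hom-values above into \eqref{eq:bimodule} shows that the required inequality is literally \eqref{eq:heterogenous-co-design}, which holds by hypothesis. Hence $(\phi_r)_\ast\R \xslashedrightarrow{d} (\phi_f)_\ast\F$ is a $\quant$-design problem in the sense of \Cref{def:dp}.

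The main obstacle is only bookkeeping. \Cref{lem:change-of-base-cat} is stated for a single quantale pair, so I would apply it twice, once to $(\quant_r,\phi_r)$ and once to $(\quant_f,\phi_f)$. I would also need to confirm that the order of the factors in \eqref{eq:heterogenous-co-design} matches \eqref{eq:bimodule}; this causes no difficulty, since $\odot$ is commutative and associative.
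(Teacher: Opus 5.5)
Your proposal is correct and matches the paper's proof: you use \Cref{lem:change-of-base-cat} (the direction in which laxity implies the pushforward is an enriched category) to see that $(\phi_r)_\ast\R$ and $(\phi_f)_\ast\F$ are $\quant$-categories, and then observe that \eqref{eq:bimodule} for these pushforwards is verbatim \eqref{eq:heterogenous-co-design}. The factor order in the two inequalities already agrees, so commutativity is not actually needed in your Step 2.
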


For example, by using the lax function in \eqref{eq:bool2cost}, design problems defined natively over orders can be elevated into quantitative spaces.
Additionally, given a $\quant$-design problem $\R \xslashedrightarrow{d}_{\quant} \F$ and a lax function $\phi: \quant \to \quant'$, the \emph{pushforward design problem} $ \phi_\ast\R \xslashedrightarrow{\phi_\ast d}_{\quant'} \phi_\ast \F$ is defined by $(\phi_\ast d)(r,f) \coloneqq \phi\bigl(d(r,f)\bigr)$. \hr{Do we still need this?}

Thus, heterogeneous design problems can be composed. Suppose $d: \R \xslashedrightarrow{}_{\quant_d} \M$ and $e: \M \xslashedrightarrow{}_{\quant_e} \F$ are design problems whose performance criteria take values in $\quant_d$ and $\quant_e$, respectively. To coordinate their coupling, we must fix a unified scalarizing quantale $\quant_{\mathsf{eval}}$ along with lax functions $\phi_d: \quant_d \to \quant_{\mathsf{eval}}$ and $\phi_e: \quant_e \to \quant_{\mathsf{eval}}$.
Then we have: 
\begin{description}
    \item[Heterogeneous series composition] is given by $(d; e): \phi_{d\ast}\R \xslashedrightarrow{}_{\quant_{\mathsf{eval}}} \phi_{e\ast}\F$, and is identically constructed as a $\quant_{\mathsf{eval}}$-design problem, defined pointwise:
        \begin{equation} \label{eq:hetero-series}
            (d; e)(r,f) \coloneqq \bigsqcup_{m \in \M} \phi_d\bigl(d(r, m)\bigr) \odot_{\mathsf{eval}} \phi_e\bigl(e(m, f)\bigr).
        \end{equation}
    \item[Heterogeneous parallel composition] maps identically over the product spaces $\R \otimes \R'$ and $\F \otimes \F'$ such that
        \begin{equation} \label{eq:hetero-parallel}
            (d \otimes e)\left((r, r'),(f, f')\right) \coloneqq \phi_d\bigl(d(r, f)\bigr) \odot_{\mathsf{eval}} \phi_e\bigl(e(r', f')\bigr).
        \end{equation}
    \item[Heterogeneous feedback] resolves self-loops for a heterogeneous component $d: \R \otimes \M \xslashedrightarrow{}_{\quant_d} \F \otimes \M$, and the heterogeneous trace operation $\Tr_{\M}(d): \phi_{d \ast}\R \xslashedrightarrow{}_{\quant_{\mathsf{eval}}} \phi_{d \ast}\F$ is given by
        \begin{equation} \label{eq:hetero-trace}
            \begin{aligned}
                \Tr_{\M}(d)(r, f) \coloneqq \bigsqcup_{m, m' \in \M} & \phi_d\bigl(d((r, m), (f, m'))\bigr)                                  \\
                                                                     & \odot_{\mathsf{eval}} \phi_d\bigl(\homarg{m}{m'}{\M}{\quant_d}\bigr).
            \end{aligned}
        \end{equation}
\end{description}

\begin{theorem}\label{thm:hetero-composition}
    The heterogeneous composition operations defined in \eqref{eq:hetero-series},
    \eqref{eq:hetero-parallel},
    and~\eqref{eq:hetero-trace} produce valid quantale-enriched design problems.
\end{theorem}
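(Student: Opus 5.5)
The plan is to reduce \Cref{thm:hetero-composition} to the homogeneous \Cref{thm:traced-monoidal} by pushing everything forward to $\quant_{\mathsf{eval}}$ first and then composing there. Concretely, given $d: \R \xslashedrightarrow{}_{\quant_d} \M$ and a lax $\phi_d: \quant_d \to \quant_{\mathsf{eval}}$, I would first show that the pushforward $\phi_{d\ast}d(r,m) \coloneqq \phi_d(d(r,m))$ is a $\quant_{\mathsf{eval}}$-design problem $\phi_{d\ast}\R \xslashedrightarrow{} \phi_{d\ast}\M$. By \Cref{lem:change-of-base-cat} the source and target are $\quant_{\mathsf{eval}}$-categories. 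For the bimodule condition \eqref{eq:bimodule}, I would apply monotonicity of $\phi_d$ to \eqref{eq:bimodule} in $\quant_d$, and then use the lax inequality \eqref{eq:lax-1} twice (together with associativity and commutativity) to get
\[
\phi_d(\homarg{m^\ast}{m}{\M}{\quant_d}) \odot_{\mathsf{eval}} \phi_d(d(r,m)) \odot_{\mathsf{eval}} \phi_d(\homarg{r}{r^\ast}{\R}{\quant_d}) \sqsubseteq \phi_d\bigl(d(r^\ast,m^\ast)\bigr).
\]
This is exactly condition \eqref{eq:heterogenous-co-design} of \Cref{lem:heterogenous-co-design} with $\phi_f=\phi_r=\phi_d$.

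With this pushforward lemma in hand, each of the three formulas is a homogeneous operation applied to pushed-forward problems. \eqref{eq:hetero-series} is the series composite \eqref{eq:q-series} of $\phi_{d\ast}d$ and $\phi_{e\ast}e$. \eqref{eq:hetero-parallel} is \eqref{eq:q-parallel} applied to $\phi_{d\ast}d$ and $\phi_{e\ast}e$. \eqref{eq:hetero-trace} is \eqref{eq:q-feedback} applied to $\phi_{d\ast}d$ over the internal $\quant_{\mathsf{eval}}$-category $\phi_{d\ast}\M$. \Cref{thm:traced-monoidal} then gives validity of each composite. For the trace, I will also note that $\phi_{d\ast}(\R\otimes\M)$ has hom-values $\phi_d(a\odot b)$, which dominate $\phi_d(a)\odot_{\mathsf{eval}}\phi_d(b)$. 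So $\phi_{d\ast}d$ remains a profunctor $\phi_{d\ast}\R\otimes\phi_{d\ast}\M \xslashedrightarrow{} \phi_{d\ast}\F\otimes\phi_{d\ast}\M$: its bimodule inequality holds with the larger hom-values, and therefore a fortiori with the smaller ones.

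The main obstacle is the \emph{typing of the intermediate space in series composition}. Here $\M$ is enriched in one quantale but is read through two different lax maps. The composite is well-defined and its bimodule law in the outer variables $r, f$ only uses $\phi_{d\ast}\R$ and $\phi_{e\ast}\F$. I would therefore verify \eqref{eq:bimodule} for \eqref{eq:hetero-series} directly rather than citing \Cref{thm:traced-monoidal}. The steps are: multiply the join by $\phi_e(\homarg{f^\ast}{f}{\F}{})$ and $\phi_d(\homarg{r}{r^\ast}{\R}{})$, distribute using the quantale join law, and bound each summand termwise by the pushforward inequality for $d$ (in $r$) and for $e$ (in $f$), with the intermediate $m$ held fixed. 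This only needs reflexivity $e\sqsubseteq$ hom, which is preserved by \eqref{eq:lax-2}. No hom-value on $\M$ is ever needed, so the mismatch in $\M$'s enrichment is harmless. The parallel and feedback cases follow the same termwise pattern, using distributivity of $\odot$ over joins.
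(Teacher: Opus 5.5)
Your proof is correct. The series case is essentially the paper's own argument. The paper also verifies the bimodule law for \eqref{eq:hetero-series} termwise: it distributes over $\bigsqcup_{m}$, merges $\phi_d(d(r,m))\odot_{\mathsf{eval}}\phi_d(\homarg{r}{r^\ast}{\R}{\quant_d})$ into $\phi_d\bigl(d(r,m)\odot_d\homarg{r}{r^\ast}{\R}{\quant_d}\bigr)$ via \eqref{eq:lax-1} (and likewise for $e$), and then uses the homogeneous laws at fixed $m$ together with monotonicity of $\phi_d,\phi_e$. Your pushforward lemma just packages these steps once.

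The genuine difference is in the parallel and feedback cases. There the paper only says they ``follow identically'' by laxity, and for the trace that $\phi_d$ distributes over the loop. You instead factor through $\phi_{d\ast}d$ and $\phi_{e\ast}e$ and cite \Cref{thm:traced-monoidal}. Your route buys modularity and makes explicit two points the paper leaves implicit:
\begin{enumerate}
\item the tensor products in \eqref{eq:hetero-parallel} and \eqref{eq:hetero-trace} only make sense between pushed-forward categories;
\item the comparison $\phi_d(a)\odot_{\mathsf{eval}}\phi_d(b)\sqsubseteq\phi_d(a\odot_d b)$ is exactly what lets you retype $\phi_{d\ast}d$ from $\phi_{d\ast}(\R\otimes\M)$ to $\phi_{d\ast}\R\otimes\phi_{d\ast}\M$, so that \eqref{eq:q-feedback} applies.
\end{enumerate}

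Your route also gives a complete trace argument, which the paper's sketch does not. The paper remarks that the $\quant_{\mathsf{eval}}$-loop is bounded above by $\phi_d$ of the $\quant_d$-loop. That bound controls the left-hand side only by $\phi_d$ of the $\quant_d$-trace at $(r^\ast,f^\ast)$. That value lies above the required target \eqref{eq:hetero-trace} at $(r^\ast,f^\ast)$, so the bound alone does not give the bimodule inequality. In exchange, the paper's direct computation avoids your auxiliary lemma and the retyping step.

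Your choice to handle series directly rather than through \Cref{thm:traced-monoidal} is also the right one. In general $\phi_{d\ast}\M$ and $\phi_{e\ast}\M$ are different $\quant_{\mathsf{eval}}$-categories, and only reflexive hom-values on $\M$ ever enter the argument.
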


\subsection{Co-Design with Implementation}\label{subsec:co-design-with-imps}

To illustrate heterogeneous composition, we consider the problem of co-designing a system with multiple components, where each component is realized by selecting from a set of available implementations (e.g.,~a catalog of available parts).

In co-design with implementation \cite[Ch.~29]{censi2026CategoriesCompositionality}, a component of a system is realized by selecting from a set $\I$ of available implementations where each implementation $i \in \I$ requires resources $\req_\I(i) \in \R$ and provides functionalities $\prov_\I(i) \in \M$ according to pre-defined functions $\req_\I: \I \to \R$ and $\prov_\I: \I \to \M$.
We evaluate feasibility in the powerset quantale $\quant_d \coloneqq \P(\I)$, which yields the profunctor $\R \xslashedrightarrow{d}_{\P(\I)} \M$ defined for all $r \in \R$ and  $m \in \M$ as
\begin{equation*}
    d(r, m) = \{ i \in \I \mid \prov_\I(i) \succeq_{\M} m \wedge \req_\I(i) \preceq_{\R} r \}.
\end{equation*}
In practice, $\F$ and $\R$ might be orders, in which case we apply \cref{eg:bool2quant}.
Similarly, feasibility for the second component is evaluated in $\quant_e \coloneqq \P(\J)$, yielding the profunctor $\M \xslashedrightarrow{e}_{\P(\J)} \F$ given by
\begin{equation*}
    e(m, f) = \{ j \in \J \mid \prov_\J(j) \succeq_{\F} f \wedge \req_\J(j) \preceq_{\M} m \}.
\end{equation*}

Because $d(r,m) \in \P(\I)$ and $e(m,f) \in \P(\J)$ reside in different evaluation spaces, they cannot be directly composed (e.g., a parts-list of motors and chassis will contain completely different parts numbers).
The shared quantale for evaluating the composite system is the product implementation space $\quant_{\mathsf{eval}} \coloneqq \P(\I \times \J)$, which represents all possible pairs of component implementations.
We project both systems into $\P(\I \times \J)$ via lax functions $\phi_d(A) \coloneqq A \times \J$ and $\phi_e(B) \coloneqq \I \times B$ for $A \subseteq \I$ and $B \subseteq \J$.
Applying \eqref{eq:hetero-series}), we obtain the composite profunctor given by
\begin{align*}
    (d ; e)(r,f) & = \bigsqcup_{m \in \M} \phi_d\bigl(d(r,m)\bigr) \odot_{\mathsf{eval}} \phi_e\bigl(e(m,f)\bigr) \\
                 & = \bigcup_{m \in \M} \bigl(d(r,m) \times \J\bigr) \cap \bigl(\I \times e(m,f)\bigr)            \\
                 & = \bigcup_{m \in \M} d(r,m) \times e(m,f).
\end{align*}
The set of pairs $(i,j)$ in $(d ; e)(r,f)$ contains exactly those pairs for which there exists some intermediate interface $m \in \M$ satisfying the condition $\prov_\I(i) \succeq_{\M} m \succeq_{\M} \req_\J(j)$.

\begin{comment}
\begin{remark}
    In monotone co-design \cite{zardiniCoDesignComplexSystems2023}, a design problem without imlementation is specified from a design problem with implementation by mapping the latter to the former via $\phi: \P(\I) \to \bool$ defined by $\phi(S) = \mathtt{true}$ if $S \neq \emptyset$ and $\phi(S) = \mathtt{false}$ if $S = \emptyset$. \hr{Should we say that this map is not lax?}
\end{remark}
\end{comment}

%%%%%%%%%%%%%%%%%%%%%%%%%%%%%%%%%%%%%%%%%%%
\section{Numerical Examples}
\label{sec:numerical-example}
%%%%%%%%%%%%%%%%%%%%%%%%%%%%%%%%%%%%%%%%%%%%%%%%
We use two examples to illustrate the proposed framework. 
The first is a small~$\cost$-enriched tracking problem, where the series composition formula can be verified analytically and against \texttt{mcdp}. 
The second is a heterogeneous \gls{uav} delivery case study, adapted from~\cite{huang2026distributionaluncertaintyadaptivedecisionmaking}, in which the same system architecture is queried in two different quantales: (i) in~$\cost$ to compute minimum lifetime cost, and (ii) in~$\P(\I_a\times\I_b)$ to recover feasible actuator--battery implementation pairs under a budget.

\subsection{Quantitative Co-Design Example}
To illustrate the properties of~$\cost$-enriched co-design over~$\bool$-enriched resource and functionality spaces, we consider the design of a target-tracking system, adapted from~\cite{huang2026distributionaluncertaintyadaptivedecisionmaking}. 
The objective is to evaluate the monetary cost of achieving a specific target-tracking functionality given an available onboard power budget.
The system consists of two sequential components, namely a sensor ($d_{\text{sensor}}: \R \xslashedrightarrow{} \M$) and a processor ($d_{\text{proc}}: \M \xslashedrightarrow{} \F$). 
We model the resource spaces as $\bool$-categories, which are canonical orders:
\begin{enumerate}
    \item Resources ($\R$): Available power budgets $R = \{5\text{W}, 10\text{W}, 20\text{W}\}$, ordered by $5\text{W} \preceq_{\R} 10\text{W} \preceq_{\R} 20\text{W}$.
    \item Internal Interface ($\M$): Sensor data quality $M = \{\text{Low}, \text{High}\}$, ordered by $\text{Low} \preceq_{\M} \text{High}$.
    \item Functionalities ($\F$): Tracking capacity $F = \{1 \text{ tgt}, 2 \text{ tgts}, 3 \text{ tgts}\}$, ordered by $1 \preceq_{\F} 2 \preceq_{\F} 3$.
\end{enumerate}

The $\cost$-profunctors map these spaces to monetary costs. 
Specifically, $d_{\text{sensor}}(r, m)$ is the cost of a sensor yielding data quality $m$ within power budget $r$, while $d_{\text{proc}}(m, f)$ is the cost of the algorithmic processing stack required to track $f$ targets using data quality $m$. 
We define these design spaces as matrices, where rows represent provided resources (increasing downwards) and columns represent required functionalities (increasing rightwards):
\begin{equation*}
    d_{\text{sensor}} =
    \begin{array}{c|cc}
        \R \setminus \M & \text{Low} & \text{High} \\
        \hline
        5\text{W}       & \$30       & \infty      \\
        10\text{W}      & \$20       & \$50        \\
        20\text{W}      & \$10       & \$30
    \end{array}
\end{equation*}
and
\begin{equation*}
    d_{\text{proc}} =
    \begin{array}{c|ccc}
        \M \setminus \F & 1\text{ tgts} & 2\text{ tgts} & 3\text{ tgts} \\
        \hline
        \text{Low}      & \$40         & \$70          & \infty        \\
        \text{High}     & \$10         & \$30          & \$50
    \end{array}
\end{equation*}

Both matrices satisfy the profunctor monotonicity condition of \cref{lem:dp-alt}: improving the available resources or relaxing the requested functionality cannot increase the minimum cost.

To evaluate the composite tracking node, we compute the series composition~$(d_{\text{sensor}}; d_{\text{proc}})$ 
using
\eqref{eq:q-series}, where the join ($\bigsqcup$) in $\cost$ is given by the infimum ($\inf$) and multiplication ($\odot$) is defined as addition ($+$). 
The resulting composite profunctor is given by
\begin{equation*}
    (d_{\text{sensor}}; d_{\text{proc}}) =
    \begin{array}{c|ccc}
        \R \setminus \F & 1\text{ tgts} & 2\text{ tgts} & 3\text{ tgts} \\
        \hline
        5\text{W}       & \$70         & \$100         & \infty        \\
        10\text{W}      & \$60         & \$80          & \$100         \\
        20\text{W}      & \$40         & \$60          & \$80
    \end{array}
\end{equation*}

This composition fundamentally executes an optimization over the intermediate interface $\M$. 
For example, calculating the cost to track 2 targets on a 10W budget yields 
\begin{align*}
    (d_{\text{sensor}}; d_{\text{proc}})(10\text{W}, 2\text{ tgts})
    & = \inf_{m \in \{\text{Low, High}\}} \bigl(d_{\text{sensor}} + d_{\text{proc}}\bigr) \\
    & = \min\{\$20 + \$70,\; \$50 + \$30\} \\
    & = \min\{\$90,\; \$80\} = \$80.
\end{align*}
The framework correctly identifies that investing in a more expensive ``High'' quality sensor (\$50) significantly reduces the processing cost (\$30), thereby yielding the optimal system configuration.

We verify these results computationally using the \texttt{mcdp} co-design solver~\cite{censiMathematicalTheoryCoDesign2016,zardiniCoDesignComplexSystems2023}. 
The tracking node is encoded in \texttt{mcdp}'s modeling language as a series composition of a \texttt{sensor} and \texttt{processor} component, with the internal data-quality interface and the additive cost aggregation wire made explicit as in \cref{fig:hacked-co-design-series}. 
Querying the solver for Pareto-optimal resource allocations under each target load yields the following minimal upward-closed sets:
\begin{align*}
    \text{1 target:}  & \quad \uparrow\!\{( 5\,\text{W},\,\$70),\, ( 10\,\text{W},\,\$60),\, ( 20\,\text{W},\,\$40)\},  \\
    \text{2 targets:} & \quad \uparrow\!\{( 5\,\text{W},\,\$100),\, ( 10\,\text{W},\,\$80),\, ( 20\,\text{W},\,\$60)\}, \\
    \text{3 targets:} & \quad \uparrow\!\{( 10\,\text{W},\,\$100),\, ( 20\,\text{W},\,\$80)\},
\end{align*}
where the 5\,W budget is infeasible at 3 targets, 
which is consistent with $d_{\text{sensor}}(5\,\text{W},\text{High}) = \infty$. These solver outputs agree exactly with the analytically derived composite matrix above, thereby validating both the theoretical composition formula and its software implementation.

% \subsection{Heterogeneous quantale-enriched co-design of a task-driven Unmanned Aerial Vehicle}\label{subsec:uav-q-design-models}
\subsection{Heterogeneous \texorpdfstring{\gls{uav}}{UAV} Case Study}\label{subsec:uav-q-design-models}

We next consider a \gls{uav} delivery design problem adapted from~\cite{huang2026distributionaluncertaintyadaptivedecisionmaking}. 
This example highlights the main benefit of heterogeneous quantale-enriched co-design: the same system architecture can be queried in different quantales to answer different engineering questions.
In the first query, the architecture is evaluated in \(\cost\) and returns the minimum lifetime cost needed to support a payload. 
In the second query, lifetime cost is treated as a resource and the architecture is evaluated in \(\P(\I_a\times\I_b)\), returning the actuator--battery implementation pairs that are feasible within the given budget.

Suppose we are designing a \gls{uav} to fulfill delivery tasks within a certain region.
The number of deliveries fulfilled is a key specification of the task, and it is a functionality provided by the \gls{uav} modeled with the quantale of natural numbers with infinity: $\quantNats = \maketup{\NatsInf,\geq,+,0}$.
$\quantNats$ is also a category enriched in $\quantNats$ itself, where the hom-object $\homarg{n}{m}{\quantNats}{\quantNats}$ equals $0$ when $n \geq m$ and $m-n$ otherwise, representing the number of deliveries comparatively not served.
One can then use a lax-function $\phi_{\quantNats, \cost}$ to represent the \emph{cost of deliveries not served}.
% In this example, we take $\phi_{\quantNats, \cost}(k) = M \sqrt{k}$ as an example.
For
$\phi_{\quantNats, \cost}$ to be a lax-function, we require: $\phi_{\quantNats, \cost}(k_1) + \phi_{\quantNats, \cost}(k_2) \geq \phi_{\quantNats, \cost}(k_1 + k_2)$, indicating a decreasing marginal cost of not satisfying an additional delivery.
We assume a total number of $1000$ required deliveries, and $\phi_{\quantNats, \cost}(k) = M \sqrt{k}$ with $M=2$ in this example.
Additionally, we have the maximum distance of deliveries fixed to \SI{1000}{\meter}, and the frequency of deliveries fixed to one  per \SI{500}{\second} as task specifications.
The payload capacity is another functionality provided by the \gls{uav}, but as designers we are interested in investigating the design results with payloads varying between \SI{100}{\gram} and \SI{2900}{\gram}.

The \gls{uav} can be decomposed into sub-systems, including task management, perception, actuation, and battery.
Although the models are simplified, compositional co-design allows for more detailed models with little overhead \cite{zardiniTaskdrivenModularCodesign2022}.

\paragraph*{Task management}
Given the number of deliveries satisfied, required delivery distance, and frequency, this component computes specific requirements for the \gls{uav}, including cycles of the battery, endurance, and velocity.
Since we allow leaving deliveries unsatisfied at a cost, the 
number of satisfied deliveries might be fewer than $1000$.

\paragraph*{Perception}
We assume a fixed perception stack (sensor and software) with no design choice involved.
Its power consumption increases with the required velocity~\cite{karaman2012high}.

\paragraph*{Actuation}
We propose a free choice among three actuators, each with the properties of weight, cost, and maximum velocity.
Each one consumes more power ($P$) from batteries when providing more lift force ($F$), captured by $P \geq p_0 + p_1 F^2$.
The implementation set for actuation is then a set of three elements~$\I_a = \makeset{a_1, a_2, a_3}$.

\paragraph*{Battery}
The design of batteries involves the choice of battery technologies and the weight.
Suppose we have eight technologies to choose from, forming the implementation set~$\I_b$, and each delivering properties including energy density, unit power per cost, and number of cycles for each maintenance.%
%\footnote{Concrete parameters can be found in the full technical report: \text{https://hansriess.s3.amazonaws.com/static/references/papers/q-design-report.pdf}}

\subsubsection{\texorpdfstring{$\cost$-enriched model}{Cost}}\label{subsec:uav-model-quantale-cost}

\begin{figure}[tb]
    \centering
    \includegraphics[width=0.5\textwidth]{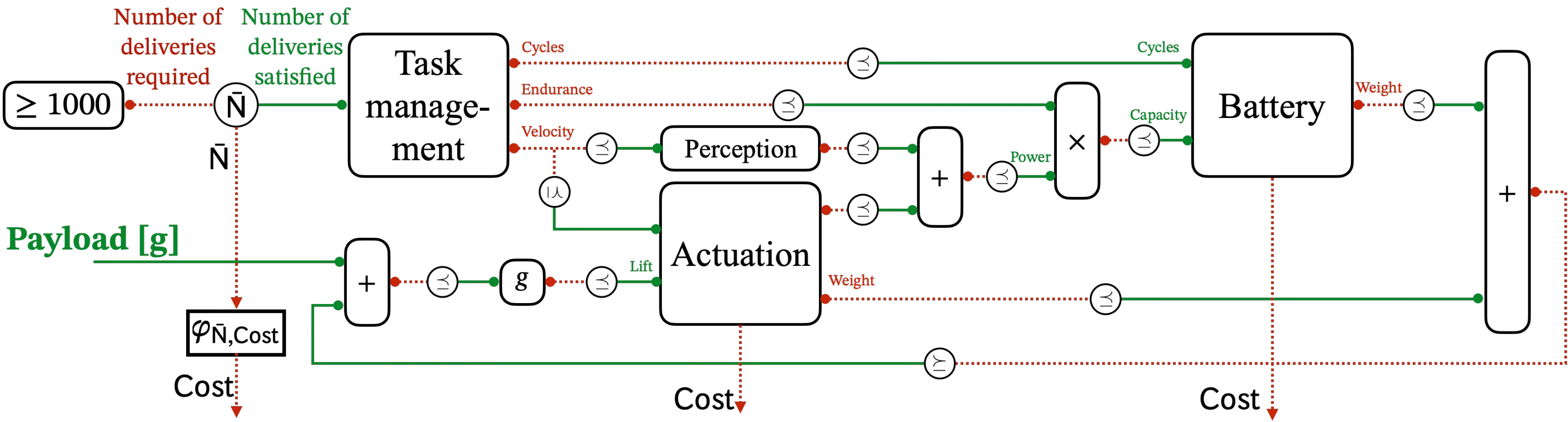}
    \caption{The compositional $\cost$-enriched co-design diagram of the \gls{uav}. Rounded boxes represent design problems, with the battery and actuation enriched in $\cost$ (represented by downwards red arrows) while others are in enriched in $\bool$. To compose them and have the final result in $\cost$, we apply \eqref{eq:cost2bool-only-one-feas} as the change-of-base lax-function. The number of missions satisfied is modeled with the $\quantNats$-enriched category $\quantNats$, and we apply the lax-function $\phi_{\quantNats, \cost}(k) = M \sqrt{k}$ to compute the induced cost.}
    \label{fig:uav-quantale-co-design-cost}
    \vspace{-1em}
\end{figure}

Suppose we care about the minimum \emph{lifetime cost} of the design solution, modeled with the quantale~$\cost$.
The quantale-enriched co-design diagram is shown in \cref{fig:uav-quantale-co-design-cost}.
The actuation design problem is a map $d_A \colon \Reals_{\geq 0}^2 \times \Reals_{\geq 0}^2 \to \cost$, mapping each velocity\&lift functionality and power\&weight resource to the minimum cost of an actuator that can satisfy them.
The battery design problem, on the other hand, computes the buying and maintenance cost of each battery technology to provide the desired functionality with the given resource, and
it selects the cost of the cheapest technology.
The composed design problem is $d_{\text{UAV},\cost} \colon \makeset{\singleton} \slashedrightarrow_\cost \Reals_{\geq 0}$, 
which is enriched in~$\cost$. Its functionality is payload and it has no resources, which is represented as a category with one object $\singleton$. 

\begin{figure}[tb]
    \centering
    \includegraphics[width=0.8\linewidth]{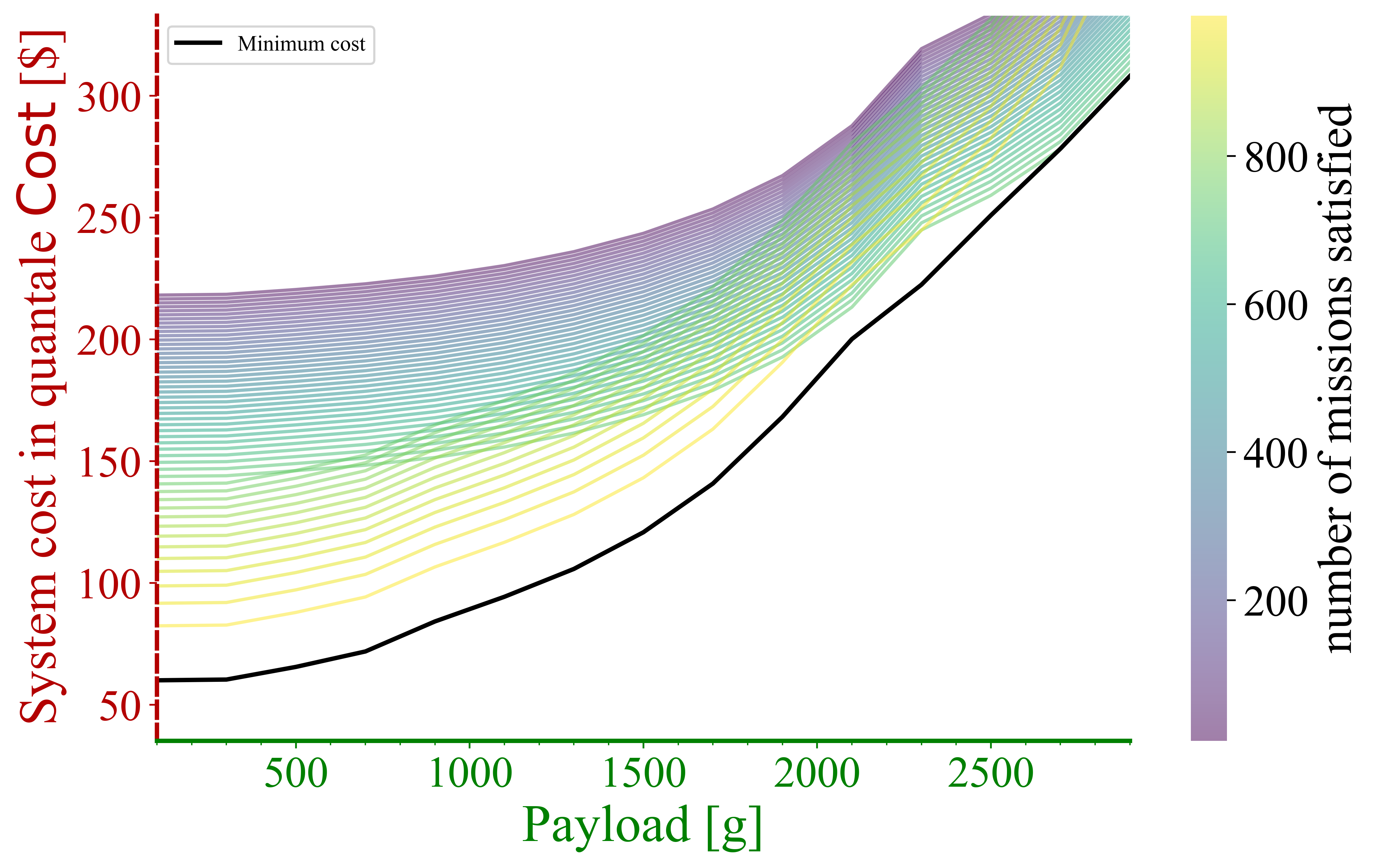}
    \caption{The cost resulting from different numbers of deliveries satisfied (colored curves) and the minimal cost as the design result in the quantale $\cost$ (black curve) for various payloads.}
    \label{fig:uav-q-design-payload-cost}
    \vspace{-1em}
\end{figure}

\Cref{fig:uav-q-design-payload-cost} shows the lifetime cost for various numbers of deliveries satisfied and values of the functionality payload.
For each payload, the compositional co-design process selects the number of deliveries to satisfy that minimizes the cost according to \eqref{eq:q-series} and the structure of the quantale $\cost$, where $\bigsqcup$ (join) is $\inf$ and $\odot$ (multiplication) is summation.

\subsubsection{Powerset-enriched model}\label{subsec:uav-model-quantale-powerset}

\begin{figure}[h]
    \centering
    \includegraphics[width=0.5\textwidth]{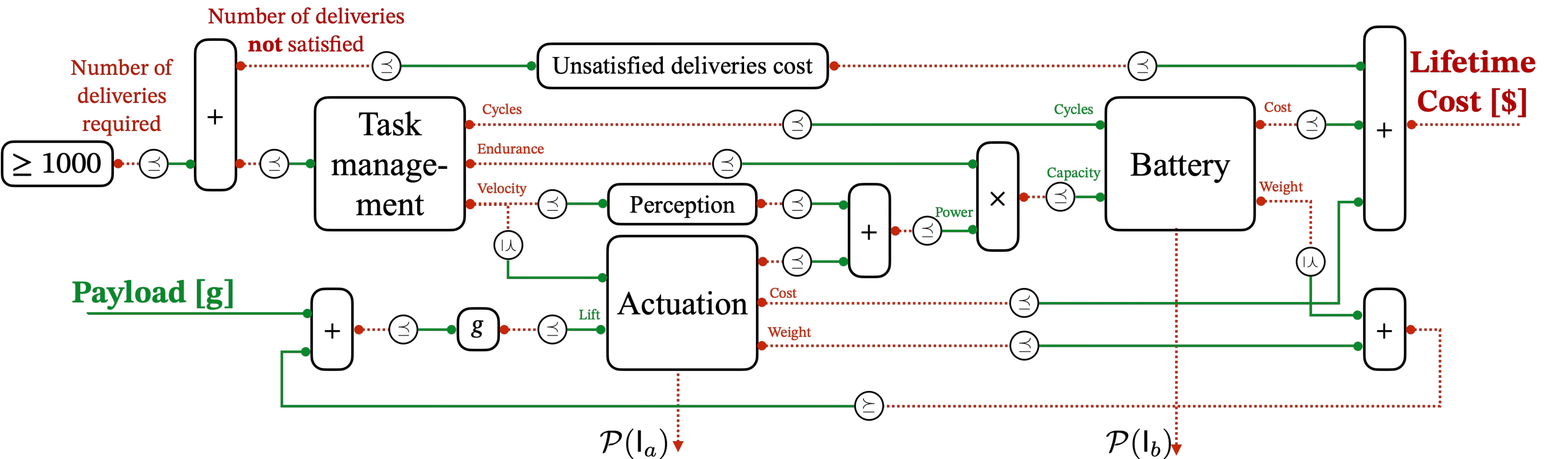}
    \caption{The compositional model of task-driven \gls{uav} design, enriched in the powerset of implementations. Each inequality represents a partially ordered set ($\bool$-enriched) category and each design problem without implementation evaluates to $\bool$ as a feasibility relation. Those $\bool$ values are mapped to $\P(\I_a \times \I_b)$ using \eqref{eg:bool2quant}.}
    \label{fig:uav-quantale-co-design-powerset-imps}
    \vspace{-1em}
\end{figure}

Suppose we can set different lifetime cost budgets and are interested in what actuators and battery technologies satisfies the payload requirement within the cost budget.
This formulation leads to
a co-design problem enriched in the powerset of implementations, 
which is illustrated in \cref{fig:uav-quantale-co-design-powerset-imps}.

The number of unsatisfied deliveries is explicitly extracted as a functionality/resource, and the corresponding cost is added with the cost of the battery and the actuator to the total lifetime cost as a system-level resource.
The battery and actuation design problems evaluate to the set of implementations satisfying the functionality and resource pairs, respectively.

One can observe a computational pitfall of $\bool$-enriched co-design by comparing \cref{fig:uav-quantale-co-design-cost} and \cref{fig:uav-quantale-co-design-powerset-imps}. 
Specifically, adding a source of a cost, namely the unsatisfied deliveries, increases the dimension of the intermediate functionality/resource spaces (i.e., cuts of the diagram have one more edge), which in turn increases the number of intermediate solutions and induces substantially more computation.
When implemented in the \texttt{mcdp} language, solving the design problem in \cref{fig:uav-quantale-co-design-cost} with a fixed number of satisfied deliveries take less than a minute, while solving \cref{fig:uav-quantale-co-design-powerset-imps} to the same resolution may take up to twenty minutes.

Fixing a payload and a lifetime cost, the designer then gets a final value in $\P(\I_a \times \I_b)$ according to \cref{subsec:co-design-with-imps}.
% Note that for different system-level functionality and resource pairs, one might need different intermediate variables in the diagram, and different implementations.
The regions where each actuator and battery technology are feasible are shown in \cref{fig:uav-q-design-payload-cost-imps-result}.

\begin{figure}[tb]
    \centering
    \includegraphics[width=0.9\linewidth]{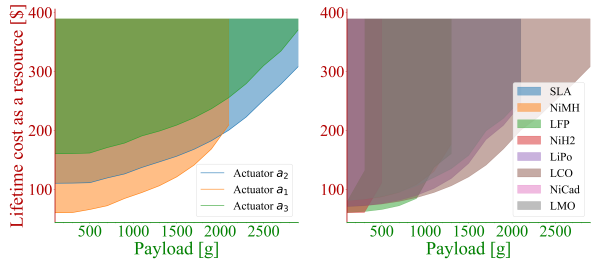}
    \caption{Regions in the payload-lifetime cost space where each actuator and battery technology are feasible. The implementation sets are plotted separately by projecting elements in $\I_a \times \I_b$ back to $\I_a$ and $\I_b$.}
    \label{fig:uav-q-design-payload-cost-imps-result}
    \vspace{-1em}
\end{figure}

%%%%%%%%%%%%%%%%%%%%%%%%%%%%%%%%%%%%%%%%%%%%%%%%
\section{Discussion}
\label{sec:discussion}
%%%%%%%%%%%%%%%%%%%%%%%%%%%%%%%%%%%%%%%%%%%%%%%%

This paper generalized monotone co-design from boolean feasibility to quantale-valued evaluation, and 
it established series, parallel, and feedback composition in the quantale-valued setting. It also showed how heterogeneous components can be composed through changes of base. 
We close by highlighting two directions that naturally extend this framework.

\paragraph*{Compositional Optimization}
$\cost$-design naturally relates to multistage optimization problems such as model predictive control~\cite{hanksModelingModelPredictive2024}. 
When~$\R$ and~$\F$ are Euclidean spaces, the profunctor~$d(r,f)$ can be interpreted as a \emph{convex bifunction} for a parameterized optimization problem, e.g.,~$d(r,f)=g(f)+\delta_{\{h(f)\le r\}}$.
Under this interpretation, series composition 
in~\eqref{eq:q-series} 
recovers the \emph{infimal composition of bifunctions} introduced by Rockafellar~\cite[Sec.~38]{rockafellarConvexAnalysis1970}. 
This formalism connects quantale-enriched co-design to the emerging area of \emph{compositional optimization}~\cite{hanksCompositionalFrameworkFirstOrder2024,steinCompositionalFrameworkConvex2024, zardiniCoDesignComplexSystems2023}.

\paragraph*{Optimal Co-Design}
In \Cref{sec:hetero}, implementation-aware design problems are composed using powersets of feasible implementations. 
A natural next step is to optimize over such feasible sets by changing the base quantale from~$\P(\I)$ to~$\cost$.
The most direct candidate is to map a feasible set~$S\in\mathcal{P}(\I)$ to the cost of its cheapest element, namely~$\phi(S)=\inf_{i\in S} c(i).$
However, this intuitive map is not a lax function from $\mathcal{P}(\I)$ to~$\cost$.
Understanding how to combine implementation-level composition with optimization remains an open problem, and points to a deeper integration of feasibility, choice, and cost in co-design.
% In \cref{sec:hetero}, we introduced the notation of a design problem with implementation, relying on the structure of powersets to compose heterogeneous design problems.
% A natural question to ask is whether one can optimize over a subset of feasible implementations by changing the base quantale from $\P(\I)$ to $\cost$.
% One approach is to evaluate a set of implementations $S \in \mathcal{P}(\I)$ in terms of monetary budget is to map the set to the cost of its cheapest element: $\phi(S) = \inf_{i \in S} c(i)$. However, this intuitive mapping fails to form a valid lax function from $\mathcal{P}(\I)$ to $\cost$.

\begin{comment}
By definition, a lax function requires $\phi(A) \odot_{\cost} \phi(B) \sqsubseteq_{\cost} \phi(A \odot_{\mathcal{P}} B)$ for any subsets $A, B \subseteq \I$. Because multiplication in $\cost$ is addition and multiplication in $\mathcal{P}(\I)$ is intersection, this translates to the inequality $\phi(A) + \phi(B) \geq \phi(A \cap B)$. This inequality doesn't hold for example when evaluating mutually exclusive design choices. If $A$ and $B$ are completely disjoint sets of valid implementations ($A \cap B = \emptyset$), then $\phi(\emptyset) = \infty$. The laxity condition thus forces $\phi(A) + \phi(B) \geq \infty$.
\end{comment}

%%%%%%%%%%%%%%%%%%%%%%%%%%%%%%
\bibliographystyle{ieeetr}
\bibliography{biblio}
%%%%%%%%%%%%%%%%%%%%%%%%%%%%%%

%%%%%%%%%%%%%%%%%%%%%%%%%%%%%%
\appendix
\section{Appendix}
\label{appendix}
%%%%%%%%%%%%%%%%%%%%%%%%%%%%%%

\subsection*{Proof of \Cref{lem:dp-alt}}

We show that \eqref{eq:bimodule} and \eqref{eq:bimodule-alt} are equivalent via the defining adjunction of the internal hom. Recall that for any $p, q, r \in \quant$, the internal hom satisfies $r \odot p \sqsubseteq q$ if and only if $r \sqsubseteq [p, q]$.

Applying the adjunction to \eqref{eq:bimodule} with $p = d(r,f)$, $r = \homarg{f^\ast}{f}{\F}{\quant} \odot \homarg{r}{r^\ast}{\R}{\quant}$, and $q = d(r^\ast,f^\ast)$, we obtain
\begin{equation*}
    \homarg{f^\ast}{f}{\F}{\quant} \odot d(r,f) \odot \homarg{r}{r^\ast}{\R}{\quant} \sqsubseteq d(r^\ast,f^\ast)
\end{equation*}
if and only if
\begin{equation*}
    \homarg{f^\ast}{f}{\F}{\quant} \odot \homarg{r}{r^\ast}{\R}{\quant} \sqsubseteq [d(r,f),\, d(r^\ast,f^\ast)].
\end{equation*}

\subsection*{Proof of \Cref{thm:traced-monoidal}}
We establish the closure of 
the series, parallel, and feedback operations. 

\begin{lemma} \label{lem:q-series}
    The series composition defined in \eqref{eq:q-series} produces a valid $\quant$-design problem.
\end{lemma}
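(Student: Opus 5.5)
We verify the defining inequality \eqref{eq:bimodule} for $d_1;d_2$ directly, using only the quantale axioms. Three facts do all the work. First, $\odot$ is associative and commutative. Second, $\odot$ distributes over arbitrary joins. Third, $\odot$ is monotone in each argument. The third fact follows from the second: if $a \sqsubseteq b$ then $b = a \sqcup b$, so $p \odot b = (p\odot a) \sqcup (p \odot b)$, which gives $p \odot a \sqsubseteq p \odot b$.

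Fix $f,f^\ast \in \F$ and $r,r^\ast \in \R$. Using distributivity twice, we move the outer hom-values inside the join:
\begin{equation*}
\homarg{f^\ast}{f}{\F}{\quant} \odot (d_1;d_2)(r,f) \odot \homarg{r}{r^\ast}{\R}{\quant}
= \bigsqcup_{m \in \M} \homarg{f^\ast}{f}{\F}{\quant} \odot d_1(r,m) \odot d_2(m,f) \odot \homarg{r}{r^\ast}{\R}{\quant}.
\end{equation*}
By a join's universal property, it suffices to bound each term by $(d_1;d_2)(r^\ast,f^\ast)$.

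For a fixed $m$, we regroup the term using commutativity and associativity:
\begin{equation*}
\bigl(e \odot d_1(r,m) \odot \homarg{r}{r^\ast}{\R}{\quant}\bigr) \odot \bigl(\homarg{f^\ast}{f}{\F}{\quant} \odot d_2(m,f) \odot e\bigr).
\end{equation*}
Next we use the identity axiom $e \sqsubseteq \homarg{m}{m}{\M}{\quant}$ together with monotonicity of $\odot$. This replaces each $e$ by $\homarg{m}{m}{\M}{\quant}$, and the term becomes bounded above by
\begin{equation*}
\bigl(\homarg{m}{m}{\M}{\quant} \odot d_1(r,m) \odot \homarg{r}{r^\ast}{\R}{\quant}\bigr) \odot \bigl(\homarg{f^\ast}{f}{\F}{\quant} \odot d_2(m,f) \odot \homarg{m}{m}{\M}{\quant}\bigr).
\end{equation*}
We now apply the profunctor condition \eqref{eq:bimodule} to each factor separately. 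For the first factor we take $d_1$ with $f=f^\ast=m$, which bounds it by $d_1(r^\ast,m)$. For the second factor we take $d_2$ with $r=r^\ast=m$, which bounds it by $d_2(m,f^\ast)$.

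Monotonicity of $\odot$ then gives
\begin{equation*}
\text{term}_m \sqsubseteq d_1(r^\ast,m)\odot d_2(m,f^\ast) \sqsubseteq \bigsqcup_{m'\in\M} d_1(r^\ast,m')\odot d_2(m',f^\ast) = (d_1;d_2)(r^\ast,f^\ast).
\end{equation*}
Taking the join over $m$ completes the proof.

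The main point needing care is the inequality $e \sqsubseteq \homarg{m}{m}{\M}{\quant}$ combined with monotonicity of $\odot$. This is what lets us apply \eqref{eq:bimodule} with one side held fixed. Alternatively, one can note that $\homarg{m}{m}{}{}$ can simply be dropped, since $x = e \odot x \sqsubseteq \homarg{m}{m}{}{} \odot x$.
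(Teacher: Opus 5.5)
Your proof is correct and follows essentially the same route as the paper's: distribute the outer hom-values into the join, regroup by commutativity, use $e \sqsubseteq \homarg{m}{m}{\M}{\quant}$ to specialize the profunctor conditions of $d_1$ and $d_2$ at the intermediate object $m$, and bound the join termwise. Your explicit derivation of monotonicity of $\odot$ from join-distributivity is a detail the paper uses without stating.
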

\begin{proof}
    Let $d_1: \R \xslashedrightarrow{} \M$ and $d_2: \M \xslashedrightarrow{} \F$ be two $\quant$-design problems. By \Cref{def:dp}, they satisfy the respective profunctor monotonicity conditions
    \begin{equation} \label{eq:pf_d1}
        \homarg{m^\ast}{m}{\M}{\quant} \odot d_1(r,m) \odot \homarg{r}{r^\ast}{\R}{\quant} \sqsubseteq d_1(r^\ast, m^\ast)
    \end{equation}
    and
    \begin{equation} \label{eq:pf_d2}
        \homarg{f^\ast}{f}{\F}{\quant} \odot d_2(m,f) \odot \homarg{m'}{m^\ast}{\M}{\quant} \sqsubseteq d_2(m',f^\ast)
    \end{equation}
    for any $r, r^\ast \in \R$, $m, m^\ast, m' \in \M$, and $f, f^\ast \in \F$. Setting $m = m^\ast$ in \eqref{eq:pf_d1} and using the identity axiom $e \sqsubseteq \homarg{m}{m}{\M}{\quant}$ yields
    \begin{equation} \label{eq:pf_d1_simp}
        d_1(r,m) \odot \homarg{r}{r^\ast}{\R}{\quant} \sqsubseteq d_1(r^\ast, m).
    \end{equation}
    Similarly, setting $m' = m^\ast = m$ in \eqref{eq:pf_d2} yields 
    \begin{equation} \label{eq:pf_d2_simp}
        \homarg{f^\ast}{f}{\F}{\quant} \odot d_2(m,f) \sqsubseteq d_2(m,f^\ast).
    \end{equation}

    We must show that the composite $(d_1;d_2)$ satisfies the condition given in \eqref{eq:bimodule} that 
    \begin{equation*}
        \homarg{f^\ast}{f}{\F}{\quant} \odot (d_1;d_2)(r,f) \odot \homarg{r}{r^\ast}{\R}{\quant} \sqsubseteq (d_1;d_2)(r^\ast,f^\ast).
    \end{equation*}
    Applying the definition from \eqref{eq:q-series} gives 
    \begin{align*}
         & \homarg{f^\ast}{f}{\F}{\quant} \odot \left( \bigsqcup_{m \in \M} d_1(r,m) \odot d_2(m,f) \right) \odot \homarg{r}{r^\ast}{\R}{\quant} \\
         & = \bigsqcup_{m \in \M} \Big( \homarg{f^\ast}{f}{\F}{\quant} \odot d_1(r,m) \odot d_2(m,f) \odot \homarg{r}{r^\ast}{\R}{\quant} \Big).
    \end{align*}
    By assuming all quantales considered in this paper are commutative (i.e., $\odot$ is symmetric), we can cleanly pair the terms and bound them using \Cref{eq:pf_d1_simp} and \Cref{eq:pf_d2_simp} via 
    \begin{align*}
         & \bigsqcup_{m \in \M} \Big( \bigl( d_1(r,m) \odot \homarg{r}{r^\ast}{\R}{\quant} \bigr) \odot \bigl( \homarg{f^\ast}{f}{\F}{\quant} \odot d_2(m,f) \bigr) \Big) \\
         & \sqsubseteq \bigsqcup_{m \in \M} \Big( d_1(r^\ast, m) \odot d_2(m, f^\ast) \Big)                                                                               \\
         & = (d_1;d_2)(r^\ast, f^\ast).
    \end{align*}
\end{proof}

\begin{lemma} \label{lem:q-parallel}
    The parallel composition defined in \eqref{eq:q-parallel} produces a valid $\quant$-design problem.
\end{lemma}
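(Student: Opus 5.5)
We must show $d \otimes d'$ satisfies \eqref{eq:bimodule} with respect to the tensor product categories $\R \otimes \R'$ and $\F \otimes \F'$ of \Cref{def:tensor-product}. That is, for all $(r,r'),(r^\ast,r'^\ast) \in \R \otimes \R'$ and $(f,f'),(f^\ast,f'^\ast) \in \F \otimes \F'$, we need
\begin{equation*}
\homarg{(f^\ast,f'^\ast)}{(f,f')}{\F\otimes\F'}{\quant} \odot (d\otimes d')\bigl((r,r'),(f,f')\bigr) \odot \homarg{(r,r')}{(r^\ast,r'^\ast)}{\R\otimes\R'}{\quant} \sqsubseteq d(r^\ast,f^\ast)\odot d'(r'^\ast,f'^\ast).
\end{equation*}
The plan is direct. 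First expand each hom-value of the tensor product as a product of component hom-values. Then expand $(d\otimes d')$ via \eqref{eq:q-parallel}. The left-hand side becomes a product of six factors:
\begin{equation*}
\homarg{f^\ast}{f}{\F}{\quant}\odot\homarg{f'^\ast}{f'}{\F'}{\quant}\odot d(r,f)\odot d'(r',f')\odot\homarg{r}{r^\ast}{\R}{\quant}\odot\homarg{r'}{r'^\ast}{\R'}{\quant}.
\end{equation*}

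Next, using associativity and commutativity of $\odot$ (as in the proof of \Cref{lem:q-series}), regroup this into
\begin{equation*}
\bigl(\homarg{f^\ast}{f}{\F}{\quant}\odot d(r,f)\odot\homarg{r}{r^\ast}{\R}{\quant}\bigr)\odot\bigl(\homarg{f'^\ast}{f'}{\F'}{\quant}\odot d'(r',f')\odot\homarg{r'}{r'^\ast}{\R'}{\quant}\bigr).
\end{equation*}
Each bracket is bounded above by $d(r^\ast,f^\ast)$ and $d'(r'^\ast,f'^\ast)$ respectively, by \eqref{eq:bimodule} for $d$ and for $d'$.

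The only nontrivial ingredient, and the mild obstacle, is monotonicity of $\odot$ in each argument: $a\sqsubseteq b$ implies $a\odot c\sqsubseteq b\odot c$. This follows from distributivity over joins in \Cref{def:quantale}. If $a\sqsubseteq b$, then $b=a\sqcup b$, so $c\odot b=(c\odot a)\sqcup(c\odot b)\sqsupseteq c\odot a$. Applying monotonicity twice (once per factor, with commutativity) yields the required inequality.

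For completeness, one should also note that $\R\otimes\R'$ and $\F\otimes\F'$ are genuine $\quant$-categories. The identity axiom follows from $e=e\odot e\sqsubseteq \homarg{x}{x}{}{}\odot\homarg{y}{y}{}{}$ by the same monotonicity. Composition follows by regrouping four factors with commutativity and again applying monotonicity.
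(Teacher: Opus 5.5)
Your proposal is correct and follows the paper's proof essentially step for step. You expand the tensor-product hom-values and \eqref{eq:q-parallel}, regroup by associativity and commutativity into the two triples, and bound each triple by the profunctor condition for $d$ and $d'$; your extra checks (monotonicity of $\odot$ from distributivity over joins, and that $\R\otimes\R'$ and $\F\otimes\F'$ are genuine $\quant$-categories) fill in points the paper uses without proof.
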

\begin{proof}
    Let $d: \R \xslashedrightarrow{} \F$ and $d': \R' \xslashedrightarrow{} \F'$ be two $\quant$-design problems. Following \Cref{def:dp}, they satisfy
    \begin{align}
        \homarg{f^\ast}{f}{\F}{\quant} \odot d(r,f) \odot \homarg{r}{r^\ast}{\R}{\quant}          & \sqsubseteq d(r^\ast,f^\ast) \label{eq:pf_parallel_d}         \\
        \homarg{f'^\ast}{f'}{\F'}{\quant} \odot d'(r',f') \odot \homarg{r'}{r'^\ast}{\R'}{\quant} & \sqsubseteq d'(r'^\ast,f'^\ast) \label{eq:pf_parallel_dprime}
    \end{align}
    for all valid objects in their respective categories.

    We must show that the parallel composition $d_{\otimes} \coloneqq (d \otimes d')$ satisfies the profunctor monotonicity condition on the product categories $\R \otimes \R'$ and $\F \otimes \F'$, namely that 
    \begin{multline*}
        \homarg{(f^\ast,f'^\ast)}{(f,f')}{\F \otimes \F'}{\quant} \odot d_{\otimes}\bigl((r,r'),(f,f')\bigr) \\
        \odot \homarg{(r,r')}{(r^\ast,r'^\ast)}{\R \otimes \R'}{\quant} \sqsubseteq d_{\otimes}\bigl((r^\ast,r'^\ast),(f^\ast,f'^\ast)\bigr).
    \end{multline*}

    Expanding this expression via the definition of the tensor product of $\quant$-categories (\Cref{def:tensor-product}) and parallel composition (\Cref{eq:q-parallel}), we have
    \begin{multline*}
        \Big( \homarg{f^\ast}{f}{\F}{\quant} \odot \homarg{f'^\ast}{f'}{\F'}{\quant} \Big) \odot \Big( d(r,f) \odot d'(r',f') \Big) \\
        \odot \Big( \homarg{r}{r^\ast}{\R}{\quant} \odot \homarg{r'}{r'^\ast}{\R'}{\quant} \Big).
    \end{multline*}

    Using the associativity and commutativity of the quantale multiplication ($\odot$), we can rearrange the terms into pairs corresponding to the base components $d$ and $d'$ as
    \begin{align*}
        = {} & \Big( \homarg{f^\ast}{f}{\F}{\quant} \odot d(r,f) \odot \homarg{r}{r^\ast}{\R}{\quant} \Big)                 \\
             & \odot \Big( \homarg{f'^\ast}{f'}{\F'}{\quant} \odot d'(r',f') \odot \homarg{r'}{r'^\ast}{\R'}{\quant} \Big).
    \end{align*}

    By applying the monotonicity constraints from \eqref{eq:pf_parallel_d} and \eqref{eq:pf_parallel_dprime}, we can bound this product as 
    \begin{align*}
         & \sqsubseteq d(r^\ast,f^\ast) \odot d'(r'^\ast,f'^\ast)               \\
         & = d_{\otimes}\bigl((r^\ast,r'^\ast),(f^\ast,f'^\ast)\bigr).
    \end{align*}
\end{proof}
\begin{lemma} \label{lem:q-feedback}
    The feedback operation defined in \eqref{eq:q-feedback} produces a valid $\quant$-design problem.
\end{lemma}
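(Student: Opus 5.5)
Our plan follows the series-composition argument of \Cref{lem:q-series}: distribute the outer factors through the join, then absorb them into the inner term using the profunctor condition of $d$. Let $d: \R \otimes \M \xslashedrightarrow{} \F \otimes \M$ be a $\quant$-design problem. We must show, for all $r,r^\ast \in \R$ and $f,f^\ast \in \F$,
\begin{equation*}
\homarg{f^\ast}{f}{\F}{\quant} \odot \Tr_{\M}(d)(r,f) \odot \homarg{r}{r^\ast}{\R}{\quant} \sqsubseteq \Tr_{\M}(d)(r^\ast,f^\ast).
\end{equation*}

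First we specialize the profunctor condition \eqref{eq:bimodule} of $d$ on the tensor categories (\Cref{def:tensor-product}). We take the $\M$-components to be unchanged, i.e.\ $(r,m)\to(r^\ast,m)$ and $(f^\ast,m')\to(f,m')$. The tensor hom-values become $\homarg{r}{r^\ast}{\R}{\quant}\odot\homarg{m}{m}{\M}{\quant}$ and $\homarg{f^\ast}{f}{\F}{\quant}\odot\homarg{m'}{m'}{\M}{\quant}$. Since $e \sqsubseteq \homarg{m}{m}{\M}{\quant}$ and $\odot$ is monotone (a consequence of distributivity over joins), we may replace these identity hom-values by $e$. This yields
\begin{equation*}
\homarg{f^\ast}{f}{\F}{\quant} \odot d\bigl((r,m),(f,m')\bigr) \odot \homarg{r}{r^\ast}{\R}{\quant} \sqsubseteq d\bigl((r^\ast,m),(f^\ast,m')\bigr)
\end{equation*}
for all $m,m' \in \M$.

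Next we expand $\Tr_{\M}(d)(r,f)$ by \eqref{eq:q-feedback} and use distributivity of $\odot$ over arbitrary joins, applied on each side via commutativity, to pull the two outer factors inside $\bigsqcup_{m,m'}$. Associativity and commutativity let us regroup each summand as
\begin{equation*}
\Bigl(\homarg{f^\ast}{f}{\F}{\quant} \odot d\bigl((r,m),(f,m')\bigr) \odot \homarg{r}{r^\ast}{\R}{\quant}\Bigr) \odot \homarg{m}{m'}{\M}{\quant}.
\end{equation*}
The specialized inequality bounds the parenthesized term by $d\bigl((r^\ast,m),(f^\ast,m')\bigr)$. Monotonicity of $\odot$ in one argument and monotonicity of joins then bound the whole expression by $\bigsqcup_{m,m'} d\bigl((r^\ast,m),(f^\ast,m')\bigr)\odot\homarg{m}{m'}{\M}{\quant} = \Tr_{\M}(d)(r^\ast,f^\ast)$.

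The only delicate step is the specialization of the tensor hom-values. It needs the identity axiom together with monotonicity of $\odot$, and the latter follows from distributivity applied to $p\odot(q\sqcup q')$. The hom-value $\homarg{m}{m'}{\M}{\quant}$ plays no role in the bound and is simply carried along. Everything else is bookkeeping with commutativity, which the paper assumes throughout.
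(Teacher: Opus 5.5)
Your proof is correct and follows the paper's argument essentially step for step. The paper likewise specializes the tensor-product profunctor condition with the $\M$-components held fixed, drops the identity hom-values via $e \sqsubseteq \homarg{m}{m}{\M}{\quant}$, distributes the outer factors through the join, regroups by commutativity, and bounds each term while carrying $\homarg{m}{m'}{\M}{\quant}$ along; your derivation of monotonicity of $\odot$ from distributivity also makes explicit a step the paper uses silently.
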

\begin{proof}
    Let $d: \R \otimes \M \xslashedrightarrow{} \F \otimes \M$ be a $\quant$-design problem. By Definition~\ref{def:dp}, the profunctor monotonicity condition on product spaces dictates that for all $r, r^\ast \in \R$, $m_1, m_2, m_1^\ast, m_2^\ast \in \M$, and $f, f^\ast \in \F$ we have 
    \begin{align*}
         & \homarg{(f^\ast,m_2^\ast)}{(f,m_2)}{\F \otimes \M}{\quant} \odot d\bigl((r,m_1),(f,m_2)\bigr) \\
         & \qquad \odot \homarg{(r,m_1)}{(r^\ast,m_1^\ast)}{\R \otimes \M}{\quant}                       \\
         & \sqsubseteq d\bigl((r^\ast,m_1^\ast),(f^\ast,m_2^\ast)\bigr).
    \end{align*}
    Expanding the tensor product hom-objects (\Cref{def:tensor-product}) yields
    \begin{align*}
         & \Big(\homarg{f^\ast}{f}{\F}{\quant} \odot \homarg{m_2^\ast}{m_2}{\M}{\quant}\Big) \odot d\bigl((r,m_1),(f,m_2)\bigr) \\
         & \qquad \odot \Big(\homarg{r}{r^\ast}{\R}{\quant} \odot \homarg{m_1}{m_1^\ast}{\M}{\quant}\Big)                       \\
         & \sqsubseteq d\bigl((r^\ast,m_1^\ast),(f^\ast,m_2^\ast)\bigr).
    \end{align*}
    To analyze the components forming $\Tr_{\M}(d)$, we configure the free variables by assigning $m_1^\ast = m_1$ and $m_2^\ast = m_2$. Using the identity axioms $e \sqsubseteq \homarg{m_1}{m_1}{\M}{\quant}$ and $e \sqsubseteq \homarg{m_2}{m_2}{\M}{\quant}$, we drop these terms to tightly bound the expression via 
    \begin{multline} \label{eq:pf_trace_simp}
        \homarg{f^\ast}{f}{\F}{\quant} \odot d\bigl((r,m_1),(f,m_2)\bigr) \odot \homarg{r}{r^\ast}{\R}{\quant} \\ \sqsubseteq d\bigl((r^\ast,m_1),(f^\ast,m_2)\bigr).
    \end{multline}

    We must prove that $\Tr_{\M}(d)$ satisfies profunctor monotonicity, namely
    \begin{multline} \label{eq:pf_trace_goal}
        \homarg{f^\ast}{f}{\F}{\quant} \odot \Tr_{\M}(d)(r,f) \odot \homarg{r}{r^\ast}{\R}{\quant} \\ \sqsubseteq \Tr_{\M}(d)(r^\ast,f^\ast).
    \end{multline}
    Substituting the trace formula in \eqref{eq:q-feedback} into the left-hand side of \eqref{eq:pf_trace_goal} and exploiting distributivity over $\bigsqcup$, we have 
    \begin{align*}
         & \homarg{f^\ast}{f}{\F}{\quant} \odot \Bigg( \bigsqcup_{m_1,m_2 \in \M} d\bigl((r,m_1),(f,m_2)\bigr)  \\
         & \qquad \qquad \odot \homarg{m_1}{m_2}{\M}{\quant} \Bigg) \odot \homarg{r}{r^\ast}{\R}{\quant}        \\
         & = \bigsqcup_{m_1,m_2 \in \M} \Big( \homarg{f^\ast}{f}{\F}{\quant} \odot d\bigl((r,m_1),(f,m_2)\bigr) \\
         & \qquad \qquad \odot \homarg{m_1}{m_2}{\M}{\quant} \odot \homarg{r}{r^\ast}{\R}{\quant} \Big).
    \end{align*}
    Using commutativity of $\odot$, we regroup the elements into the bounded unit from \eqref{eq:pf_trace_simp} to reach 
    \begin{align*}
         & = \bigsqcup_{m_1,m_2 \in \M} \Big( \homarg{f^\ast}{f}{\F}{\quant} \odot d\bigl((r,m_1),(f,m_2)\bigr) \odot \homarg{r}{r^\ast}{\R}{\quant} \Big) \\
         & \qquad \qquad \odot \homarg{m_1}{m_2}{\M}{\quant}                                                                                               \\
         & \sqsubseteq \bigsqcup_{m_1,m_2 \in \M} d\bigl((r^\ast,m_1),(f^\ast,m_2)\bigr) \odot \homarg{m_1}{m_2}{\M}{\quant}                               \\
         & = \Tr_{\M}(d)(r^\ast, f^\ast). 
    \end{align*}
\end{proof}

\subsection*{Proof of \Cref{lem:change-of-base-cat}}

For one direction, suppose $\phi$ is lax and $\cat{C}$ is a $\quant$-category.
For the identity axiom, combining the lax unit condition $e' \sqsubseteq' \phi(e)$ with the $\quant$-category identity $e \sqsubseteq \homarg{x}{x}{\cat{C}}{\quant}$ and monotonicity of $\phi$, we obtain
\begin{equation*}
    e' \sqsubseteq' \phi(e) \sqsubseteq' \phi\bigl(\homarg{x}{x}{\cat{C}}{\quant}\bigr) = \homarg{x}{x}{\phi_\ast\cat{C}}{\quant'}.
\end{equation*}
For the composition axiom, we apply lax multiplicativity and then monotonicity to the $\quant$-category composition law to find 
\begin{align*}
     & \homarg{x}{y}{\phi_\ast\cat{C}}{\quant'} \odot' \homarg{y}{z}{\phi_\ast\cat{C}}{\quant'}               \\
     & = \phi\bigl(\homarg{x}{y}{\cat{C}}{\quant}\bigr) \odot' \phi\bigl(\homarg{y}{z}{\cat{C}}{\quant}\bigr) \\
     & \sqsubseteq' \phi\bigl(\homarg{x}{y}{\cat{C}}{\quant} \odot \homarg{y}{z}{\cat{C}}{\quant}\bigr)
    \sqsubseteq' \phi\bigl(\homarg{x}{z}{\cat{C}}{\quant}\bigr)                                               \\
     & = \homarg{x}{z}{\phi_\ast\cat{C}}{\quant'}.
\end{align*}

For the other direction, suppose $\phi_{\ast}\cat{C}$ is a $\quant'$-category for every $\quant$-category $\cat{C}$.
For the unit condition, consider the single-object $\quant$-category $\cat{E}$ with $\homarg{\star}{\star}{\cat{E}}{\quant} \coloneqq e$. This is a valid enriched category since $e \sqsubseteq e$ and $e \odot e = e \sqsubseteq e$. Applying the identity axiom of $\phi_{\ast}\cat{E}$ yields $e' \sqsubseteq' \phi(e)$.

For the multiplicativity condition, fix arbitrary $q_1, q_2 \in Q$ and consider the three-object $\quant$-category $\cat{W}$ on the set $\{a, b, c\}$ with
\begin{equation*}
    \homarg{a}{b}{\cat{W}}{\quant} = q_1, \quad
    \homarg{b}{c}{\cat{W}}{\quant} = q_2, \quad
    \homarg{a}{c}{\cat{W}}{\quant} = q_1 \odot q_2,
\end{equation*}
with diagonal elements $\homarg{x}{x}{\cat{W}}{\quant} = e$, and all remaining elements given by $\bot = \bigsqcup \emptyset$. By virtue of the fact that $\bot$ is absorbing for $\odot$ (since $q \odot \bot = q \odot \bigsqcup \emptyset = \bigsqcup \emptyset = \bot$), every composition axiom involving a $\bot$-valued hom reduces to $\bot \sqsubseteq q$, which holds universally. Applying the composition axiom of $\phi_{\ast}\cat{W}$ to the triple $(a, b, c)$ yields
\begin{equation*}
    \phi(q_1) \odot' \phi(q_2) \sqsubseteq' \phi(q_1 \odot q_2). 
\end{equation*}

\subsection*{Proof of \cref{lem:heterogenous-co-design}}

By \cref{lem:change-of-base-cat}, $(\phi_{f})_\ast\F$ and $(\phi_{r})_\ast\R$ are both $\quant$-categories, with hom-values $\homarg{f^\ast}{f}{(\phi_{f})_\ast\F}{\quant} = \phi_f(\homarg{f^\ast}{f}{\F}{\quant_f})$ and $\homarg{r}{r^\ast}{(\phi_{r})_\ast\R}{\quant} = \phi_r(\homarg{r}{r^\ast}{\R}{\quant_r})$.
According to \eqref{eq:bimodule}, to make $d$ a valid $\quant$-design problem we need
\begin{equation*}
    \homarg{f^\ast}{f}{(\phi_{f})_\ast\F}{\quant} \odot d(r,f) \odot \homarg{r}{r^\ast}{(\phi_{r})_\ast\R}{\quant} \sqsubseteq d(r^\ast,f^\ast),
\end{equation*}
which is exactly \eqref{eq:heterogenous-co-design}. 

\subsection*{Proof of \Cref{thm:hetero-composition}}

We prove that each heterogeneous composition operation produces a valid profunctor.

\begin{lemma}
    The heterogeneous series composition defined in \eqref{eq:hetero-series} produces a valid $\quant_{\mathsf{eval}}$-design problem.
\end{lemma}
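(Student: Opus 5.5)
We reduce to the homogeneous case: push $d$ and $e$ forward to $\quant_{\mathsf{eval}}$ and then apply \Cref{lem:q-series}. Concretely, the composite is a $\quant_{\mathsf{eval}}$-profunctor from $(\phi_d)_\ast\R$ to $(\phi_e)_\ast\F$ through an intermediate category built from $\M$. By \Cref{lem:change-of-base-cat}, $(\phi_d)_\ast\R$ and $(\phi_e)_\ast\F$ are $\quant_{\mathsf{eval}}$-categories, with hom-values $\phi_d(\homarg{r}{r^\ast}{\R}{\quant_d})$ and $\phi_e(\homarg{f^\ast}{f}{\F}{\quant_e})$.

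The first step is to verify that the pushforward $(\phi_\ast d)(r,m) \coloneqq \phi(d(r,m))$ of a $\quant$-design problem along a lax $\phi$ is a $\quant'$-design problem. Applying monotonicity of $\phi$ to \eqref{eq:bimodule} and then lax multiplicativity \eqref{eq:lax-1} twice gives
\[
\phi(\homarg{m^\ast}{m}{\M}{\quant}) \odot' \phi(d(r,m)) \odot' \phi(\homarg{r}{r^\ast}{\R}{\quant}) \sqsubseteq' \phi\bigl(\homarg{m^\ast}{m}{\M}{\quant} \odot d(r,m) \odot \homarg{r}{r^\ast}{\R}{\quant}\bigr) \sqsubseteq' \phi(d(r^\ast,m^\ast)).
\]
In the series proof we only need the one-sided versions with the $\M$-hom set to $e$. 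These are \eqref{eq:pf_d1_simp} and \eqref{eq:pf_d2_simp} transported: $\phi_d(d(r,m)) \odot \phi_d(\homarg{r}{r^\ast}{\R}{\quant_d}) \sqsubseteq \phi_d(d(r^\ast,m))$, and similarly for $e$ on the $\F$ side.

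The second step repeats the computation of \Cref{lem:q-series} verbatim, in $\quant_{\mathsf{eval}}$. Distribute $\phi_e(\homarg{f^\ast}{f}{\F}{\quant_e}) \odot (\cdot) \odot \phi_d(\homarg{r}{r^\ast}{\R}{\quant_d})$ over the join in \eqref{eq:hetero-series}. Use commutativity to pair the resource hom with the $\phi_d(d(r,m))$ factor and the functionality hom with the $\phi_e(e(m,f))$ factor. Bound each pair by the transported inequalities, and recognize the result as $(d;e)(r^\ast,f^\ast)$.

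The main obstacle is that $\M$ carries two structures, $(\phi_d)_\ast\M$ and $(\phi_e)_\ast\M$, which need not agree, so the composite is not literally the composite of two profunctors over a common middle. My plan is to sidestep this, since the validity condition only involves the $\R$ and $\F$ homs; the middle homs enter only through $e \sqsubseteq \homarg{m}{m}{\M}{}$, which survives pushforward by \eqref{eq:lax-2}. If a genuine middle category is wanted, I would try the meet of the two pushforwards, but I would not need it here.
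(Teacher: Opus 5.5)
Your proposal is correct and is essentially the paper's proof: distribute the outer homs over the join, use commutativity to pair them with the $\phi_d$ and $\phi_e$ factors, then bound each pair using lax multiplicativity and monotonicity of $\phi_d,\phi_e$ together with the one-sided profunctor inequalities for $d$ and $e$ (obtained from the identity axiom on $\M$). The differences are cosmetic: you transport the one-sided inequalities before running the series computation, whereas the paper applies laxity inside it. Your remark that $\M$ carries two possibly different enrichments is a fair point the paper glosses over, and it is harmless for exactly the reason you give.
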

\begin{proof}
    Let $d: \R \xslashedrightarrow{}_{\quant_d} \M$ and $e: \M \xslashedrightarrow{}_{\quant_e} \F$ be heterogeneous design problems. We must show that $(d; e): \phi_{d\ast}\R \xslashedrightarrow{}_{\quant_{\mathsf{eval}}} \phi_{e\ast}\F$ satisfies the $\quant_{\mathsf{eval}}$-profunctor condition
    \begin{multline*}
        \homarg{f^\ast}{f}{\phi_{e\ast}\F}{\quant_{\mathsf{eval}}} \odot_{\mathsf{eval}} (d;e)(r,f) \odot_{\mathsf{eval}} \homarg{r}{r^\ast}{\phi_{d\ast}\R}{\quant_{\mathsf{eval}}} \\ \sqsubseteq_{\mathsf{eval}} (d;e)(r^\ast,f^\ast).
    \end{multline*}
    Substituting the definitions of pushforward categories and series composition, the left-hand side is
    \begin{align*}
         & \phi_e(\homarg{f^\ast}{f}{\F}{\quant_e}) \odot_{\mathsf{eval}} \left( \bigsqcup_{m \in \M} \phi_d(d(r, m)) \odot_{\mathsf{eval}} \phi_e(e(m, f)) \right) \\
         & \qquad \odot_{\mathsf{eval}} \phi_d(\homarg{r}{r^\ast}{\R}{\quant_d})                                                                                    \\
         & = \bigsqcup_{m \in \M} \left( \phi_d(d(r, m)) \odot_{\mathsf{eval}} \phi_d(\homarg{r}{r^\ast}{\R}{\quant_d}) \right)                                     \\
         & \qquad \odot_{\mathsf{eval}} \left( \phi_e(\homarg{f^\ast}{f}{\F}{\quant_e}) \odot_{\mathsf{eval}} \phi_e(e(m, f)) \right).
    \end{align*}
    By the laxity of $\phi_d$ and $\phi_e$ (Definition \ref{def:lax}), we have $\phi(q_1) \odot \phi(q_2) \sqsubseteq \phi(q_1 \odot q_2)$. Hence, 
    \begin{align*}
         & \sqsubseteq \bigsqcup_{m \in \M} \phi_d(d(r, m) \odot_d \homarg{r}{r^\ast}{\R}{\quant_d}) \odot_{\mathsf{eval}} \phi_e(\homarg{f^\ast}{f}{\F}{\quant_e} \odot_e e(m, f)).
    \end{align*}
    By the homogeneous profunctor conditions for $d$ and $e$, and the monotonicity of $\phi_d$ and $\phi_e$, we obtain
    \begin{align*}
         & \sqsubseteq \bigsqcup_{m \in \M} \phi_d(d(r^\ast, m)) \odot_{\mathsf{eval}} \phi_e(e(m, f^\ast)) = (d;e)(r^\ast, f^\ast).
    \end{align*}
\end{proof}

\begin{lemma}
    The heterogeneous parallel composition defined in \eqref{eq:hetero-parallel} produces a valid $\quant_{\mathsf{eval}}$-design problem.
\end{lemma}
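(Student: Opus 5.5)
We argue exactly as in the homogeneous case (\Cref{lem:q-parallel}), inserting the lax functions $\phi_d$ and $\phi_e$ before each use of the component conditions. Let $d: \R \xslashedrightarrow{}_{\quant_d} \F$ and $e: \R' \xslashedrightarrow{}_{\quant_e} \F'$. The composite $(d \otimes e)$ is meant to be a $\quant_{\mathsf{eval}}$-profunctor from $\phi_{d\ast}\R \otimes \phi_{e\ast}\R'$ to $\phi_{d\ast}\F \otimes \phi_{e\ast}\F'$. By \Cref{lem:change-of-base-cat} the pushforwards are $\quant_{\mathsf{eval}}$-categories, so their tensor products (\Cref{def:tensor-product}) are as well. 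It therefore suffices to verify \eqref{eq:bimodule}. On the functionality side the hom-value of the tensor product is $\phi_d(\homarg{f^\ast}{f}{\F}{\quant_d}) \odot_{\mathsf{eval}} \phi_e(\homarg{f'^\ast}{f'}{\F'}{\quant_e})$, and the resource hom-values are handled the same way.

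Expand the left-hand side of \eqref{eq:bimodule} into a $\odot_{\mathsf{eval}}$-product of six factors. Using associativity and commutativity of $\odot_{\mathsf{eval}}$, regroup them into two triples. The first is $\phi_d(\homarg{f^\ast}{f}{\F}{\quant_d}) \odot_{\mathsf{eval}} \phi_d(d(r,f)) \odot_{\mathsf{eval}} \phi_d(\homarg{r}{r^\ast}{\R}{\quant_d})$, and the second is the analogous $\phi_e$-triple for $e$. Applying the lax inequality \eqref{eq:lax-1} twice to each triple bounds the product by
\[
\phi_d\bigl(\homarg{f^\ast}{f}{\F}{\quant_d} \odot_d d(r,f) \odot_d \homarg{r}{r^\ast}{\R}{\quant_d}\bigr) \odot_{\mathsf{eval}} \phi_e\bigl(\cdots\bigr).
\]
Here we use that $\odot_{\mathsf{eval}}$ is monotone in each argument. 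This follows from distributivity over joins, since $p \sqsubseteq p'$ means $p \sqcup p' = p'$.

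Next apply the homogeneous profunctor conditions \eqref{eq:bimodule} for $d$ and $e$ inside $\phi_d$ and $\phi_e$. Monotonicity of $\phi_d$ and $\phi_e$, together with monotonicity of $\odot_{\mathsf{eval}}$, then gives the bound $\phi_d(d(r^\ast,f^\ast)) \odot_{\mathsf{eval}} \phi_e(e(r'^\ast,f'^\ast))$. By \eqref{eq:hetero-parallel} this is exactly $(d\otimes e)((r^\ast,r'^\ast),(f^\ast,f'^\ast))$.

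The only delicate step is the bookkeeping of which quantale each hom-value lives in. The domain and codomain categories must be pushed forward along the same map as the component they attach to: $\R,\F$ along $\phi_d$ and $\R',\F'$ along $\phi_e$. Otherwise the lax inequality cannot merge the factors inside a single $\phi$. Once the typing is fixed this way, the argument is a routine combination of commutativity, laxity, and monotonicity.
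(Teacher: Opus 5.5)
Your proof is correct and follows the same route as the paper's one-line argument. You regroup the six factors by commutativity into a $\phi_d$-triple and a $\phi_e$-triple, merge each triple with the lax inequality \eqref{eq:lax-1}, and then apply the component profunctor conditions together with monotonicity of $\phi_d$, $\phi_e$, and $\odot_{\mathsf{eval}}$. Typing the domain and codomain explicitly as $\phi_{d\ast}\R \otimes \phi_{e\ast}\R'$ and $\phi_{d\ast}\F \otimes \phi_{e\ast}\F'$ usefully fills in what the paper leaves implicit when it writes $\R \otimes \R'$ and $\F \otimes \F'$.
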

\begin{proof}
    The proof follows identically by applying the laxity condition $\phi(q_1) \odot \phi(q_2) \sqsubseteq \phi(q_1 \odot q_2)$ to the product terms in \eqref{eq:hetero-parallel} and using the independent monotonicity of each constituent design problem.
\end{proof}

\begin{lemma}
    The heterogeneous feedback operation defined in \eqref{eq:hetero-trace} produces a valid $\quant_{\mathsf{eval}}$-design problem.
\end{lemma}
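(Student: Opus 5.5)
We will mirror the proof of \Cref{lem:q-feedback}, inserting $\phi_d$ at each step and using its laxity. Let $d: \R \otimes \M \xslashedrightarrow{}_{\quant_d} \F \otimes \M$ be a $\quant_d$-design problem and $\phi_d: \quant_d \to \quant_{\mathsf{eval}}$ lax. We must show
\begin{multline*}
\phi_d(\homarg{f^\ast}{f}{\F}{\quant_d}) \odot_{\mathsf{eval}} \Tr_{\M}(d)(r,f) \odot_{\mathsf{eval}} \phi_d(\homarg{r}{r^\ast}{\R}{\quant_d}) \\ \sqsubseteq_{\mathsf{eval}} \Tr_{\M}(d)(r^\ast,f^\ast).
\end{multline*}
These are exactly the hom-values of the pushforwards $\phi_{d\ast}\F$ and $\phi_{d\ast}\R$, which are $\quant_{\mathsf{eval}}$-categories by \Cref{lem:change-of-base-cat}.

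The first step is to record the simplified monotonicity \eqref{eq:pf_trace_simp} for $d$ in $\quant_d$. This is obtained exactly as in the proof of \Cref{lem:q-feedback}, by setting $m_1^\ast=m_1$ and $m_2^\ast=m_2$ and discarding the diagonal hom-values via $e \sqsubseteq \homarg{m}{m}{\M}{\quant_d}$ and monotonicity of $\odot$. The result is
\[
\homarg{f^\ast}{f}{\F}{\quant_d} \odot_d d\bigl((r,m_1),(f,m_2)\bigr) \odot_d \homarg{r}{r^\ast}{\R}{\quant_d} \sqsubseteq_d d\bigl((r^\ast,m_1),(f^\ast,m_2)\bigr).
\]

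The second step is to substitute \eqref{eq:hetero-trace} into the left-hand side and use distributivity of $\odot_{\mathsf{eval}}$ over $\bigsqcup$ to move the two outer factors inside the join over $(m_1,m_2)$. By commutativity and associativity, each summand regroups as
\[
\Bigl(\phi_d(\homarg{f^\ast}{f}{\F}{\quant_d}) \odot_{\mathsf{eval}} \phi_d\bigl(d((r,m_1),(f,m_2))\bigr) \odot_{\mathsf{eval}} \phi_d(\homarg{r}{r^\ast}{\R}{\quant_d})\Bigr) \odot_{\mathsf{eval}} \phi_d(\homarg{m_1}{m_2}{\M}{\quant_d}).
\]

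The third step handles the bracketed triple. Applying the lax inequality \eqref{eq:lax-1} twice bounds it above by $\phi_d$ of the $\odot_d$-product of the three arguments. Monotonicity of $\phi_d$ together with the simplified inequality from the first step then bounds this by $\phi_d\bigl(d((r^\ast,m_1),(f^\ast,m_2))\bigr)$.

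The last factor $\phi_d(\homarg{m_1}{m_2}{\M}{\quant_d})$ is left untouched. Monotonicity of $\odot_{\mathsf{eval}}$ in each argument, which follows from distributivity over binary joins, lets us multiply the bound from the third step by it. Monotonicity of $\bigsqcup$ then gives $\Tr_{\M}(d)(r^\ast,f^\ast)$, as required.

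The main subtlety is in the second and third steps. We need to apply laxity to the whole triple at once: pushing all four factors through $\phi_d$ would be possible, but it is unnecessary and would muddle the matching with \eqref{eq:hetero-trace}. It is also important that only the feedback hom $\homarg{m_1}{m_2}{\M}{\quant_d}$ remains outside. Apart from this, the argument is routine bookkeeping identical to \Cref{lem:q-feedback}.
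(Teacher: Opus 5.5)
Your proof is correct and follows the same route as the paper, whose argument for this lemma is only a brief sketch: use laxity of $\phi_d$ to bound the $\quant_{\mathsf{eval}}$-evaluation by $\phi_d$ of the $\quant_d$-evaluation, then invoke the homogeneous monotonicity. Your version mirrors the paper's heterogeneous series proof, keeps $\phi_d(\homarg{m_1}{m_2}{\M}{\quant_d})$ outside, and supplies the details the paper omits.

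One correction to your closing remark: pushing the feedback factor through $\phi_d$ as well is a dead end, not merely unnecessary. Laxity only gives $\phi_d(x)\odot_{\mathsf{eval}}\phi_d(y)\sqsubseteq_{\mathsf{eval}}\phi_d(x\odot_d y)$, so you could not split $\phi_d\bigl(d((r^\ast,m_1),(f^\ast,m_2))\odot_d \homarg{m_1}{m_2}{\M}{\quant_d}\bigr)$ back into the product form required by \eqref{eq:hetero-trace}.
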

\begin{proof}
    By applying the trace formula in \eqref{eq:hetero-trace}, the profunctor condition is verified by distributing the scalarizing lax function $\phi_d$ over the internal composition of the loop. Specifically, the laxity of $\phi_d$ ensures that evaluating the feedback loop in $\quant_{\mathsf{eval}}$ is bounded by the pushforward of the loop's evaluation in $\quant_d$.
\end{proof}

\subsection*{Experimental Details from \Cref{subsec:uav-q-design-models}}

\begin{table}[h!]
    \centering
    \begin{tabular}{l C{0.8cm} C{0.8cm} C{0.8cm} C{0.8cm} C{0.8cm}}
        \toprule
        {Actuator} & {Weight} [\si{g}] & {Cost} \newline [\si{\$}] & {Velocity} \newline [\si{m/s}] & $p_0$ \newline [\si{W}] & $p_1$ \newline [\si{W/N^2}] \\
        \hline
        $a_1$      & 50.0              & 50.0                      & 3.0                            & 1.0                     & 2.0                         \\
        $a_2$      & 100.0             & 100.0                     & 3.0                            & 2.0                     & 1.5                         \\
        $a_3$      & 150.0             & 150.0                     & 3.0                            & 3.0                     & 1.5                         \\
    \end{tabular}
    \vspace{1pt}
    \begin{tabular}{L{1.6cm} C{1.6cm} C{1.6cm} C{2.0cm}}
        \toprule
        {Battery Technology} & {Energy density} [\si{Wh/kg}] & {Unit power per cost} [\si{Wh/\$}] & {Number of cycles for each maintenance} \\
        \hline
        NiMH                 & 100.0                         & 3.41                               & 500                                   \\
        NiH2                 & 45.0                          & 10.50                              & 20,000                                \\
        LCO                  & 195.0                         & 2.84                               & 750                                   \\
        LMO                  & 150.0                         & 2.84                               & 500                                   \\
        NiCad                & 30.0                          & 7.50                               & 500                                   \\
        SLA                  & 30.0                          & 7.00                               & 500                                   \\
        LiPo                 & 150.0                         & 2.50                               & 600                                   \\
        LFP                  & 90.0                          & 1.50                               & 1,500                                 \\
        \bottomrule
    \end{tabular}
    \caption{Parameters of implementations for actuation and battery components taken from~\cite{censi2017uncertainty}.}
    \label{tab:param-actuator-batteries}
\end{table}

\end{document}